%% file: main.tex
\documentclass[
reprint,
nofootinbib,
superscriptaddress,
amsmath,amssymb,
aps,
pra,
floatfix,
]{revtex4-2}

\makeatletter
\def\label#1{%
  \@bsphack
  \begingroup
  \UseHookWithArguments{label}{1}{#1}%
  \protected@write\@auxout{}%
    {\string\newlabel{#1}{{\@currentlabel}{\thepage}%
      {\@currentlabelname}{\@currentHref}{\@kernel@reserved@label@data}}}%
  \endgroup
  \@esphack
}
\let\ltx@label\label
\makeatother
\usepackage{bbm}
\usepackage{amsmath,amssymb,amsthm,amscd,mathtools}
\usepackage{enumitem}
\usepackage{xcolor}
\usepackage[
    colorlinks=false,
    linkbordercolor={1 0 0},
    citebordercolor={0 1 0},
    urlbordercolor={0 1 1}
]{hyperref}

\usepackage{aliascnt}
\usepackage[nameinlink,noabbrev]{cleveref}

\theoremstyle{plain}

\newtheorem{theorem}{Theorem}[section]
\crefname{theorem}{theorem}{theorems}
\Crefname{theorem}{Theorem}{Theorems}

\theoremstyle{definition}

\newaliascnt{lemma}{theorem}
\newtheorem{lemma}[lemma]{Lemma}
\aliascntresetthe{lemma}
\crefname{lemma}{lemma}{lemmas}
\Crefname{lemma}{Lemma}{Lemmas}
\newcommand{\1}{\mathbf{1}}

\newaliascnt{proposition}{theorem}
\newtheorem{proposition}[proposition]{Proposition}
\aliascntresetthe{proposition}
\crefname{proposition}{proposition}{propositions}
\Crefname{proposition}{Proposition}{Propositions}

\newaliascnt{corollary}{theorem}
\newtheorem{corollary}[corollary]{Corollary}
\aliascntresetthe{corollary}
\crefname{corollary}{corollary}{corollaries}
\Crefname{corollary}{Corollary}{Corollaries}

\theoremstyle{definition}

\newaliascnt{definition}{theorem}

\aliascntresetthe{definition}
\crefname{definition}{definition}{definitions}
\Crefname{definition}{Definition}{Definitions}

\newaliascnt{remark}{theorem}

\aliascntresetthe{remark}
\crefname{remark}{remark}{remarks}
\Crefname{remark}{Remark}{Remarks}

\newcommand{\F}{\mathbb F}

\newcommand{\CNOT}{\mathrm{CNOT}}
\newcommand{\ellCNOT}{\ell_{\mathrm{CNOT}}}
\newcommand{\ellclean}{\ell_{\mathrm{clean}}}
\newcommand{\ellcleanr}[1]{\ell_{\mathrm{clean},#1}}
\newcommand{\ellborrow}{\ell_{\mathrm{borrow}}}
\newcommand{\ellborrowr}[1]{\ell_{\mathrm{borrow},#1}}
\newcommand{\CNOTDistance}{\mathsf{CNOT\text{-}Distance}}

\DeclareMathOperator{\GL}{GL}
\DeclareMathOperator{\Cay}{Cay}
\DeclareMathOperator{\xor}{xor}
\DeclareMathOperator{\val}{val}

\begin{document}

\title{CNOT-Distance is NP-complete under all-to-all connectivity}

\author{Antonio Acuaviva}
\author{Arturo Acuaviva}
\author{Pablo Acuaviva}

\noaffiliation

\hypersetup{
    pdftitle={CNOT-Distance is NP-complete under all-to-all connectivity},
    pdfauthor={Antonio Acuaviva, Arturo Acuaviva, Pablo Acuaviva},
    pdfsubject={Mathematics of computation and quantum circuit synthesis},
    pdfkeywords={CNOT synthesis, linear reversible circuits, NP-completeness, Vertex Cover, Cayley graph distance}
}

\date{\today}

\begin{abstract}

Given $A\in\GL(N,2)$ and an integer $K$, we ask whether $A$ can be implemented by at most $K$ CNOT gates on fixed labelled wires with all-to-all connectivity. We prove that this problem is NP-complete. From a finite simple graph $G=(V,E)$, we construct an upper-unitriangular matrix
\[
A_G\in\GL(2|V|+|E|+1,2)
\]
satisfying
\[
\ell_{\CNOT}(A_G)=2|V|+2|E|+\tau(G),
\]
where $\tau(G)$ is the minimum vertex-cover size. Each target matrix has $O(N)$ nonzero entries and row Hamming weight at most four. The lower bound unfolds an arbitrary CNOT circuit into an XOR directed acyclic graph and applies projection--contraction operations, permitting cancellation and unrestricted reuse of intermediate parities. For this family, the optimum is unchanged by any finite number of clean or borrowed ancillary wires, required to be restored. A polynomial-time decoder also yields hardness of approximation within every fixed additive constant and, via an L-reduction from Minimum Vertex Cover on cubic graphs, APX-hardness of the associated CNOT-circuit optimisation problem.
\end{abstract}

\maketitle

\input{1-Introduction}
\input{2-Preliminaries-and-definitions}
\input{3-The-reduction}
\input{4-Complexity-consequences}
\input{5-conclusion}

\section*{Statement on AI use}

Large language models, in particular OpenAI’s ChatGPT 5.6 Pro, were used in the development of this work. The authors independently verified all mathematical claims, proofs, citations, and final text.
 
\clearpage
\appendix
\input{6_Appendix}

\bibliographystyle{ieeetr}
\bibliography{bibliography}

\end{document}

%% file: 1-Introduction.tex
\section{Introduction}\label{sec:introduction}

Controlled-NOT (CNOT) gates are a central primitive in quantum-circuit synthesis. Together with arbitrary one-qubit gates, they form a universal gate library \cite{BarencoEtAl1995}. Circuits composed exclusively of CNOT gates implement precisely the invertible linear transformations over $\F_2$. Such linear-reversible transformations arise both as stand-alone synthesis targets in reversible and quantum computation and as explicit components of standard synthesis procedures for stabilizer circuits and CNOT--phase circuits \cite{AaronsonGottesman2004,AmyAzimzadehMosca2019}. CNOT-only optimisation has consequently been studied as a stand-alone synthesis problem, while CNOT and Clifford resynthesis methods have also been used as optimisation primitives within broader quantum-compilation workflows \cite{BravyiShaydulinHuMaslov2021,MurphyKissinger2023}.

There is an important distinction between constructing a short implementation and finding a shortest one. Gaussian elimination synthesises every $A\in\GL(N,2)$ using $O(N^2)$ CNOT gates. Patel, Markov, and Hayes improved the worst-case bound to $O(N^2/\log N)$, matching the information-theoretic lower bound up to a multiplicative constant \cite{PatelMarkovHayes2008}. This determines the optimal worst-case order of growth, but it does not determine the \textit{minimum} gate count required. 

Instance-wise optimisation has therefore been approached through exact distances, lower bounds, and scalable methods. Bataille computed exact small-width distances and studied special families \cite{Bataille2022}; Christensen et al. subsequently obtained the complete distance distribution through seven wires \cite{ChristensenEtAl2025}, while Bu, Fan, and Joo developed polynomial-time lower bounds~\cite{BuFanJoo2025}. Recent results of J{\o}rgensen and of Gong and Yu construct explicit families whose CNOT complexity is at least $4N-o(N)$~\cite{Jorgensen2026,GongYu2026}. Other work uses algebraic heuristics, reinforcement learning, model-based planning, and parallel exact synthesis \cite{DeBrugiereEtAl2021,RomanelloEtAl2025,CossioEtAl2026,LiEtAl2026}. These results concern algorithmic or extremal properties of the metric, and while they provide several indications of computational difficulty, they do not address the complexity of evaluating the all-to-all, fixed-label distance of a supplied matrix.

Against this algorithmic background, the available hardness results concern closely related but distinct synthesis models. Amy, Azimzadeh, and Mosca proved NP-completeness for fixed-target parity-network synthesis and for parity networks with linearly encoded inputs, while leaving the arbitrary-target, unencoded, ancilla-free parity-network problem open~\cite{AmyAzimzadehMosca2019}. The distinction is that a parity network prescribes parities that must occur during a computation, rather than a final reversible transformation; recent graphic parity-network results remain in that model~\cite{CaoEtAl2025}.

Hardness is also known when either the gate library or the architecture is changed. Van de Wetering and Amy proved, under polynomial-time Turing reductions, NP-hardness of CNOT-count minimisation when candidate circuits may use the full Clifford+$T$ gate set ~\cite{VanDeWeteringAmy2024}. Jiang et al. proved constant-factor inapproximability for global CNOT size and depth under a supplied directed topology with clean ancillas, as well as for local window-size minimisation with fixed surrounding circuitry ~\cite{JiangEtAl2020}. Closest to our setting, Kang proved NP-hardness of ancilla-free CNOT-only synthesis under a supplied directed topology; on the instances in Kang’s reduction, the constrained optimum is exactly $2|E|+2\tau(G)$~\cite{Kang2023}. Because that reduction uses the supplied topology essentially, it does not itself establish hardness under all-to-all connectivity.

These results leave a basic question unresolved: \textit{does minimum-size, fixed-label CNOT synthesis remain hard when every control--target pair is available?} The all-to-all model isolates this architecture-independent baseline. It removes routing and gate-availability constraints, so any remaining hardness arises from minimising the linear-reversible implementation itself rather than from the geometry of a particular device.

We work with $N$ prescribed labelled wires and permit a CNOT between every ordered pair of distinct wires. The target must be realised exactly, without a free input or output permutation, and no cancellation-free, monotonicity, or gate-order condition is imposed. The base model has no ancillary wires; clean and borrowed ancillary variants are considered separately. Formally, a CNOT with control $j$ and target $i$ is the coordinate transvection $T_{ij}=I_N+E_{ij}$, and $\ellCNOT(A)$ is the corresponding word length of $A\in\GL(N,2)$.

The problem $\CNOTDistance$ takes as input an explicitly represented square binary matrix $A\in\mathbb{F}_2^{N\times N}$ and an integer $K\ge 0$, and asks whether there exists a CNOT circuit with at most $K$ gates whose induced linear transformation is $A$. Our main result is the following.

\begin{theorem}\label{thm:main}\label{thm:np-complete}
The problem $\CNOTDistance$ is NP-complete. NP-hardness holds even for instances $(A,K)$ in which
\begin{enumerate}
  \item $A$ is upper unitriangular;
  \item every row of $A$ has Hamming weight at most four and $A$ has $O(N)$ nonzero entries;
  \item $(A-I_N)^3=0$ and $A$ has exact order four; and
  \item $K<2N$.
\end{enumerate}
\end{theorem}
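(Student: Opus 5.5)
Membership in NP needs a polynomial-size certificate. The circuit itself is only polynomial if $K$ is. For $K\ge$ some polynomial bound, e.g. the Gaussian-elimination bound $N^2$, the answer is determined by invertibility alone. We therefore reduce $K$ to $\min(K,N^2)$, and the certificate is a list of at most $\min(K,N^2)$ control--target pairs. The verifier multiplies the corresponding transvections $T_{ij}$ in $O(N)$ time each, as row operations, and compares the product with $A$. If $A$ is singular, reject.

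For hardness I reduce from Vertex Cover. Given $G=(V,E)$ and $k$, output $(A_G,\,2|V|+2|E|+k)$, with $N=2|V|+|E|+1$. One wire is reserved for each vertex in two copies, one for each edge, and one global wire. I would define $A_G$ upper unitriangular with the following nonzero off-diagonal entries.
\begin{itemize}
\item Each edge wire $e=uv$ receives the parity of $u$ and $v$ in their first copies.
\item Each vertex's second-copy wire receives its first copy plus the global wire.
\end{itemize}
The exact gadget is fixed so that rows have weight at most four and $(A-I)^3=0$. The instance is polynomial-size, and conditions 1--3 are checked directly; order four follows from $(A-I)^2\neq0=(A-I)^3$ in characteristic two. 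Condition 4 uses $\tau(G)\le|V|$ and $k\le|V|$, giving $K\le 3|V|+2|E|<2N=4|V|+2|E|+2$. The core claim is $\ellCNOT(A_G)=2|V|+2|E|+\tau(G)$.

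The upper bound is constructive. Fix a minimum vertex cover $C$. For $v\in C$, spend one extra gate to prepare a reusable intermediate parity on $v$'s wire, for example by adding the global wire to it temporarily. Each edge $uv$ then costs two gates, charged to an endpoint in $C$. Each vertex costs two gates, including the undoing gate. This totals $2|V|+2|E|+|C|$, and we verify that the product equals $A_G$.

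The lower bound is the main obstacle, because circuits may cancel gates and reuse arbitrary intermediate parities. My plan follows the abstract. Unfold an arbitrary circuit into an XOR DAG whose nodes are the successive wire values, one new node per gate. Then argue by induction on $|V|+|E|$ using projection--contraction. Deleting (projecting out) a vertex's wires, or setting the global wire to zero, maps a valid circuit for $A_G$ to a valid circuit for $A_{G-v}$ or a simpler matrix. The key step is showing that this removes at least $2+\deg_{\text{uncovered}}$ gates, or $3$ when the vertex must enter any cover. Concretely, I would show:
\begin{itemize}
\item each edge and vertex output requires two distinct DAG nodes not shared with others;
\item for every edge $uv$, at least one of $u,v$ requires an additional node;
\item the set of vertices with an additional node is therefore a vertex cover.
\end{itemize}
The counting via distinct first-appearance gates of linearly independent target parities is where I expect the real difficulty, particularly in ruling out savings from cancellation.
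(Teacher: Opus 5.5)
Your NP-membership argument is fine: the $N^2$ bound holds over $\F_2$ if a zero pivot is repaired by one row addition rather than a swap. The hardness part has two genuine gaps.

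\textbf{The matrix is never fixed, and the entries you list defeat the reduction.} Suppose each edge row receives only $x_u+x_v$ and each second vertex copy receives $x_v+z$.
\begin{itemize}
\item Every nonzero row of $A-I$ is then supported on columns whose own rows in $A-I$ vanish. Hence $(A-I)^2=0$ and $A$ has order two, not four.
\item Worse, two gates per edge ($y_e\leftarrow y_e+x_u$, $y_e\leftarrow y_e+x_v$) and two per vertex already implement $A$. So $\ellCNOT(A_G)\le 2|V|+2|E|$, and your claimed identity fails whenever $E\neq\varnothing$.
\end{itemize}
The global wire must appear in every edge output, and the vertex pair must form a chain through it, e.g.\ $y_e'=y_e+z+x_u+x_v$, $x_v'=x_v+q_v$, $q_v'=q_v+z$, $z'=z$. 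The chain $z\to q_v\to x_v$ gives $(A-I)^2\neq0$. $(A-I)^3=0$ holds because the two length-three paths $y_e\to x_u\to q_u\to z$ and $y_e\to x_v\to q_v\to z$ cancel over $\F_2$. For (iv) you must first replace $k$ by $\min\{k,|V|\}$, and exact order four needs $|V|\ge1$.

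\textbf{The lower bound $\ellCNOT(A_G)\ge 2|V|+2|E|+\tau(G)$ is only a plan, and its claims fail once reuse is allowed.} This bound is the real content of the theorem.
\begin{itemize}
\item A shared parity such as $z+x_u$ may feed many edge outputs, so ``two private nodes per output'' is not available.
\item Auxiliary nodes cannot in general be charged to single vertices: a node computing $x_u+x_v$ or $z+x_u+x_v+x_w$ belongs to none of them. So ``the vertices with an additional node form a cover'' has no content until such a charging exists.
\item The induction is not well formed. Zeroing $v$'s inputs leaves its wires in the circuit and turns incident edge outputs into $y_e+z+x_u$, so you do not obtain a circuit for $A_{G-v}$.
\item The extra saved gate would have to appear exactly when $\tau(G-v)<\tau(G)$. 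That is a global property, and no local deletion step can detect it.
\end{itemize}

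The missing idea is to strip away the reversible bookkeeping and reduce to a non-reversible kernel.
\begin{enumerate}
\item Unfold the circuit into an XOR DAG with $L$ internal nodes.
\item Project every $q_v$ to zero. The $2|V|$ distinct vertex-output nodes now compute the sources $z$ and $x_v$, so they can be contracted into them.
\item For each $e$, project $y_e$ and contract the earliest node depending on $y_e$. One parent of that node is the source $y_e$ and the other already carries the projected value, so this removes $|E|$ more nodes.
\item Delete the zero sources. What remains is an XOR straight-line program for $F_G=\{z+x_u+x_v:\{u,v\}\in E\}$ of length at most $L-2|V|-|E|$.
\end{enumerate}
The term $\tau(G)$ then enters only through the Boyar--Matthews--Peralta identity $\xor(F_G)=|E|+\tau(G)$. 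Its proof is a global rank argument, not a per-vertex charge: each non-target line lowers the dimension of a quotient of the input space by at most one, and the surviving classes yield an independent set.
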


\subsection{Proof strategy and organisation}

Our reduction is from \textsc{Vertex Cover}, which is NP-complete ~\cite{Karp1972}. Given a finite simple graph $G=(V,E)$, we construct in polynomial time an upper-unitriangular matrix
\[
    A_G\in\GL(2|V|+|E|+1,2)
\]
and prove the exact identity
\[
    \ellCNOT(A_G)=2|V|+2|E|+\tau(G),
\]
where $\tau(G)$ is the size of a minimum vertex cover of $G$.

For the upper bound, every vertex cover $C$ determines a CNOT circuit implementing $A_G$ with exactly
\[
    2|V|+2|E|+|C|
\]
gates. Selected vertex wires temporarily carry the distinguished contribution needed to form the incident edge outputs, after which all wires attain their prescribed output forms. Taking $C$ to be minimum gives the required upper bound.

The converse is the main difficulty. Under all-to-all connectivity, an arbitrary circuit need not resemble this construction: it may distribute the required parities among unrelated wires, reuse intermediate forms, and later cancel them. Thus there is no prescribed path along which the contribution of the distinguished input $z$ must travel, and no cancellation-free normal form may be assumed.

Let $L$ be the length of an arbitrary CNOT circuit implementing $A_G$. Unfolding the circuit produces an XOR directed acyclic graph with exactly $L$ internal nodes. We project selected inputs to zero and contract nodes whose projected values coincide with earlier available values. The prescribed vertex outputs force the deletion of $2|V|$ distinct nodes, while projecting the private edge inputs forces the deletion of a further $|E|$ nodes. These operations preserve all designated projected outputs. After zero-source normalisation, the remaining graph yields an unrestricted XOR straight-line program for
\[
    F_G=\{z+x_u+x_v:\{u,v\}\in E\}
\]
of length at most $L-2|V|-|E|$.

By Lemma~\ref{lem:bmp}, which is the binary specialisation of the $z$-expression lemma of Boyar, Matthews, and Peralta~\cite[Lemma~1]{BoyarMatthewsPeralta2013},
\[
    \xor(F_G)=|E|+\tau(G).
\]
Consequently,
\[
    L\ge
    2|V|+|E|+\xor(F_G)
    =2|V|+2|E|+\tau(G),
\]
which matches the upper bound without imposing any cancellation-free hypothesis. The Boyar--Matthews--Peralta identity supplies the non-reversible kernel of the reduction; the new ingredient is the projection--contraction argument that extracts this kernel from every unrestricted reversible implementation.

The same extraction applies with any finite number of clean ancillary wires by including their input variables in the projection. A borrowed implementation becomes a clean implementation when its borrowed inputs are fixed to zero. Tracking the contractions also gives the polynomial-time decoder used for the additive-inapproximability and APX-hardness consequences.

Section~\ref{sec:preliminaries} defines CNOT distance, the ancillary models, and the required XOR-circuit tools. Section~\ref{sec:reduction} constructs $A_G$ and proves the exact
distance formula and ancillary robustness. Section~\ref{sec:complexity} derives the decision, exact-computation, and approximation consequences. Appendix~\ref{app:bmp-proof} gives a self-contained binary proof of the Boyar--Matthews--Peralta kernel.

%% file: 2-Preliminaries-and-definitions.tex
\section{Notation and preliminary results}\label{sec:preliminaries}

We write $[N]=\{1,\ldots,N\}$ and denote by $I_N$ the $N\times N$ identity matrix. For $i,j\in[N]$, we denote by $E_{ij}$ the $N\times N$ matrix unit whose $(i,j)$-entry is $1$ and whose remaining entries are $0$. We identify the values on the $N$ wires with a column vector $x=(x_1,\ldots,x_N)^{\mathsf T}\in\F_2^N$. For distinct $i,j\in[N]$, the operation
\[
  x_i\leftarrow x_i+x_j
\]
has control $j$, target $i$, and matrix $T_{ij}=I_N+E_{ij}$. Equivalently, left multiplication by $T_{ij}$ adds row $j$ to row $i$.

\subsection{CNOT circuits and transvection distance}

For distinct $i,j\in[N]$, let
\begin{equation*}
    T_{ij}=I_N+E_{ij}.
\end{equation*}
For a column vector of wire values $x=(x_1,\ldots,x_N)^{\mathsf T}$, we have
\begin{equation*}
    T_{ij}x=(x_1,\ldots,x_{i-1},x_i+x_j,x_{i+1},\ldots,x_N)^{\mathsf T}.
\end{equation*}
Thus $T_{ij}$ encodes the CNOT operation
\begin{equation*}
    x_i\leftarrow x_i+x_j,
\end{equation*}
whose control is wire $j$ and whose target is wire $i$. Equivalently, left multiplication by $T_{ij}$ adds row $j$ to row $i$.

The set
\begin{equation*}
    \Sigma_N=\{T_{ij}:i,j\in[N],\ i\ne j\}
\end{equation*}
generates $\GL(N,2)$, and every generator is an involution. The minimum CNOT count of $A\in\GL(N,2)$ is therefore
\begin{equation}\label{eq:cnot-length}
    \ell_{\CNOT}(A)=\min\bigl\{L:A=T_{i_Lj_L}\ldots T_{i_1j_1},\ T_{i_tj_t}\in\Sigma_N\bigr\}.
\end{equation}
Equivalently, this is the distance from $I_N$ to $A$ in the Cayley graph
\begin{equation*}
    \Cay(\GL(N,2),\Sigma_N).
\end{equation*}
See, for example, \cite{ChristensenEtAl2025}.

We study the decision problem
\begin{equation}\label{eq:cnot-distance-problem}
    \begin{split}
        \CNOTDistance={}&\bigcup_{N\geq1}
        \bigl\{(A,K)\in\GL(N,2)\times\mathbb Z_{\geq0}:\\
        &\ell_{\CNOT}(A)\leq K\bigr\}.
    \end{split}
\end{equation}
The matrix $A$ is given explicitly by its entries and $K$ is encoded in binary. Encodings in which the first component is not a square invertible binary matrix are regarded as no-instances.

\subsection{Clean and borrowed ancillary wires}\label{subsec:ancillas}

Fix $A\in\GL(N,2)$ and $r \in \mathbb{Z}_{\geq 0}$. A CNOT circuit on $N+r$ wires is represented by some $U\in\GL(N+r,2)$, with data coordinates listed before ancillary coordinates. It is a \emph{clean-$r$ implementation} of $A$ if
\[
  U(x,0^r)=(Ax,0^r)
  \qquad\text{for every }x\in\F_2^N,
\]
and a \emph{borrowed-$r$ implementation} if
\[
  U(x,a)=(Ax,a)
  \qquad\text{for every }x\in\F_2^N,\ a\in\F_2^r.
\]
Let $\ellcleanr{r}(A)$ and $\ellborrowr{r}(A)$ be the corresponding minimum gate counts, and set
\begin{align*}
  \ellclean(A)&=\min_{r\ge0}\ellcleanr{r}(A),\\
  \ellborrow(A)&=\min_{r\ge0}\ellborrowr{r}(A).
\end{align*}
The corresponding clean-ancilla and borrowed-ancilla decision problems ask whether $\ellclean(A)\le K$ and $\ellborrow(A)\le K$, respectively, for an explicitly given matrix $A$ and binary-encoded budget $K$.

A borrowed implementation is clean after setting $a=0$, while an ancilla-free circuit may ignore any supplied ancillary wires. Therefore, for every $r$,
\begin{equation}\label{eq:ancilla-basic-ineq}
  \ellcleanr{r}(A)\le\ellborrowr{r}(A)\le\ellCNOT(A).
\end{equation}

\subsection{XOR straight-line programs}

Let $U$ be a finite set of formal input variables. A \emph{linear form in $U$} is an expression
\begin{equation*}
    f=\sum_{u\in U}\alpha_u u,
    \qquad
    \alpha_u\in\F_2.
\end{equation*}
Thus a linear form is the XOR of a subset of the variables in $U$.

An \emph{XOR straight-line program} of length $L$ begins with the input variables in $U$ available. For each $t\in[L]$, it chooses two linear forms $a_t$ and $b_t$ that are already available and produces the new linear form
\begin{equation*}
    w_t=a_t+b_t.
\end{equation*}
The forms $a_t$ and $b_t$ may be input variables or forms produced on earlier lines. Every input variable and every previously produced form remains available after the assignment. Thus, after line $t$, the available forms are
\begin{equation*}
    U\cup\{w_1,\ldots,w_t\}.
\end{equation*}
Each assignment counts as one XOR operation.

Let $F$ be a finite set of linear forms in $U$. We say that the program computes $F$ if
\begin{equation*}
    F\subseteq U\cup\{w_1,\ldots,w_L\}.
\end{equation*}
We define the \emph{XOR complexity of $F$} by
\begin{equation*}
    \begin{split}
        \xor(F)=\min\bigl\{L:\ &\text{there is a length-$L$ XOR}\\
        &\text{straight-line program computing $F$}\bigr\}.
    \end{split}
\end{equation*}

For a simple undirected graph $G=(V,E)$, introduce one distinguished input variable $z$ and one input variable $x_v$ for each $v\in V$. Thus the set of input variables is
\begin{equation*}
    U_G=\{z\}\cup\{x_v:v\in V\}.
\end{equation*}
The variable $z$ is independent of the variables $x_v$ and is not associated with any vertex of $G$. For each edge $e=\{u,v\}\in E$, define the linear form
\begin{equation*}
    f_e=z+x_u+x_v
\end{equation*}
and set
\begin{equation}\label{eq:FG}
    F_G=\{f_e:e\in E\}
       =\{z+x_u+x_v:\{u,v\}\in E\}.
\end{equation}
Let $\tau(G)$ denote the minimum size of a vertex cover of $G$.

We will use the following binary specialisation of the $z$-expression lemma of Boyar--Matthews--Peralta \cite[Lemma~1]{BoyarMatthewsPeralta2013}.

\begin{lemma}[Boyar--Matthews--Peralta]\label{lem:bmp}
For every simple graph $G = (V, E)$,
\begin{equation*}
    \xor(F_G)= |E|+\tau(G).
\end{equation*}
More strongly, from any length-$s$ linear straight-line program computing $F_G$, one can extract in polynomial time a vertex cover of size at most $s-|E|$.
\end{lemma}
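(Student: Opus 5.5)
We prove the two inequalities $\xor(F_G)\le|E|+\tau(G)$ and $\xor(F_G)\ge|E|+\tau(G)$. The lower bound comes with a constructive decoder, which also gives the polynomial-time extraction claim.

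\emph{Upper bound.} Fix a minimum vertex cover $C$ of $G$.
\begin{itemize}
\item For each $v\in C$, spend one XOR to form $z+x_v$; this costs $|C|=\tau(G)$ lines.
\item Assign each edge $e=\{u,v\}$ to an endpoint in $C$, say $v$. One more XOR, $(z+x_v)+x_u$, then produces $f_e$.
\end{itemize}
The total is $|E|+\tau(G)$ lines. The same construction turns any cover $C$ into a program of length $|E|+|C|$.

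\emph{Lower bound, with decoder.} Take an arbitrary program $w_1,\dots,w_s$ computing $F_G$, and argue by induction on $|E|$. The cases $E=\emptyset$ and the trivial graph are immediate.

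Let $t$ be the first line whose form has nonzero $z$-coefficient and at least one $x$-variable. Its two operands are available forms, so each is either $z$ alone or $z$-free. Hence $w_t=z+y$ where $y$ is a $z$-free available form; possibly $w_t$ is an $f_e$ already.

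We distinguish cases on the support of $w_t$.
\begin{itemize}
\item \emph{Singleton support:} $w_t=z+x_v$ for some vertex $v$. Substitute $z\mapsto 0$ and $x_v\mapsto 0$.
   \begin{itemize}
   \item Then $w_t$ becomes $0$. Deleting line $t$ and rewriting every later use of $w_t$ as a use of the zero form gives a program one line shorter, after normalising zero-source lines (a line $0+a$ is deleted and replaced by $a$).
   \item Edges at $v$ become the single variables $x_u$, which need no work.
   \item The remaining edges of $G-v$ have images $x_u+x_{u'}$. This is not of the form $F_{G-v}$ because $z$ is gone, so this substitution loses the structure.
   \end{itemize}
   So instead substitute only $x_v\mapsto z$ is wrong too. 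I would instead use the BMP approach directly: substitute $x_v\mapsto 0$ only. Then $w_t$ becomes $z$, which is already an input, so line $t$ is deletable. Each edge at $v$, of which there is at least one, becomes $z+x_u$.
\item \emph{General case:} $|y|\ge2$. Here $y$ was computed $z$-free, and I would charge lines on $z$-free forms of weight $\ge 2$ separately.
\end{itemize}

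I expect this charging to be the main obstacle. The cleanest route I would try is the original BMP style:
\begin{itemize}
\item Prove $s\ge |E|+\tau(G)$ by showing that the set of vertices $v$ such that $z+x_v$ is "touched" covers $G$.
\item More precisely, for each edge pick the last line producing it. That line is $f_e=a+b$ with exactly one operand containing $z$. That operand $z+q$ has $q\subsetneq\{x_u,x_v\}$ or $q$ of other form.
\end{itemize}

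Concretely, I would induct by removing one vertex $v$ and all its incident edges. The required quantity is then $\deg(v)+1$ deletions, arranged as follows.
\begin{itemize}
\item First restrict $x_v\mapsto0$. Each edge output at $v$ becomes $z+x_u$. The first $z$-line $w_t$ collapses, and we contract duplicates.
\item We must show at least $\deg(v)+1$ lines become redundant after further setting $x_u\mapsto$ copies. This requires choosing $v$ as an endpoint appearing in $w_t$. The edges at $v$ each yield an output line that now equals some other form.
\item This is exactly the hard accounting: ensuring the $\deg(v)$ edge lines plus line $t$ are distinct and become removable.
\end{itemize}
I would try assigning to each edge $e\ni v$ its own output line, argued distinct since the outputs are distinct forms, and line $t$ distinct from them unless $w_t=f_e$. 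In that case, handle it by choosing $v$ so that the contraction of $z+x_v$ still yields a saving.

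Iterating this removal yields a cover whose size is at most $s-|E|$. Every step is polynomial, which gives both the identity and the decoder.
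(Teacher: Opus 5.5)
\paragraph{Verdict.} The lower bound, which is the whole content of the lemma, is not proved. Your upper bound is correct and is the same construction as the paper's.

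\paragraph{Where the lower bound breaks.} The proposal cycles through several strategies and completes none of them.
\begin{enumerate}
\item The ``touched vertices'' idea fails if touched means that $z+x_v$ is computed. For a single edge $\{a,b\}$, the program $w_1=x_a+x_b$, $w_2=z+w_1$ computes $F_G$ optimally and never forms any $z+x_v$.
\item The same example breaks your vertex-deletion induction. Its first $z$-line is $z+y$ with $|y|=2$, so no vertex $v$ is singled out. ``Charge lines on $z$-free forms of weight $\ge 2$ separately'' is a hope, not an argument.
\item Even in the singleton case $w_t=z+x_v$, the promised saving of $\deg(v)+1$ lines is only asserted. After $x_v\mapsto 0$, line $t$ collapses onto the source $z$. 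But each line that produced $f_{\{u,v\}}$ now computes the new form $z+x_u$. That form need not duplicate anything and may be an operand of later lines, e.g.\ the stepping stone for the other edges at $u$, so it cannot simply be deleted. The remaining $\deg(v)$ savings would have to come from elsewhere, for instance $z$-free lines containing $x_v$ collapsing onto sources. You give no mechanism guaranteeing that enough such lines always exist.
\end{enumerate}
Your decoder is ``iterate this removal'', so the polynomial-time extraction claim is unsupported as well.

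\paragraph{The missing idea.} The paper uses a global count rather than a local induction. Call a line \emph{special} if its value lies in $F_G$ and \emph{auxiliary} otherwise. The $|E|$ targets are distinct and are not inputs, so the number $a$ of auxiliary lines satisfies $a\le s-|E|$. It then suffices to build a cover of size at most $a$, which the paper does in two steps.
\begin{enumerate}
\item \emph{Parent property.} Set $\sigma(r)=([z]r,\sum_v[x_v]r)$. Every special line has an operand that is the value of an earlier auxiliary line. Its $z$-free operand is either an earlier line value, necessarily auxiliary because every target contains $z$, or an input $x_v$. In the latter case the other operand has signature $(1,1)$, which no input and no target has.
\item \emph{Successive quotients.} Process the lines in order, quotienting the space of linear forms by each auxiliary value in turn. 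Both operands already map to $0$ or to a single basis vector. Hence each quotient kills one class, merges two, or does nothing, and it lowers the dimension by at most one. It also preserves the invariant that every input and every line value maps to $0$ or a single basis vector; a special line inherits the class of its other operand.
\end{enumerate}
At the end, $\rho(z)+\rho(x_u)+\rho(x_v)\in\{0\}\cup\mathcal B$ forces two of these three colours to coincide on every edge. Choosing one vertex from each colour class other than that of $z$ therefore gives an independent set of size at least $k-1\ge |V|-a$. Its complement is the required cover, and a union--find implementation makes the extraction polynomial.
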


\begin{proof}
Apply \cite[Lemma~1]{BoyarMatthewsPeralta2013} over $\F_2$ with input set
\begin{equation*}
    X=\{z\}\cup\{x_v:v\in V\}
\end{equation*}
and with the repetition-free family of $z$-expressions
\begin{equation*}
    F_G=\{z+x_u+x_v:\{u,v\}\in E\}.
\end{equation*}
The family is repetition-free because $G$ is simple, and $|F_G|=|E|$. Moreover, each $f_e$ contains three distinct input variables and is therefore not itself an input variable. Hence, from any program computing $F_G$ in the sense defined above, we may designate the first line producing each $f_e$ as its output line. Since the forms in $F_G$ are distinct, these are $|F_G|$ distinct lines, and this designation changes neither the program nor its length.

A cover of $F_G$ in the terminology of the cited lemma is a set of variables of the form
\begin{equation*}
    \{x_v:v\in C\}
\end{equation*}
meeting every expression $z+x_u+x_v$; this is equivalent to $C$ being a vertex cover of $G$, and the two covers have the same cardinality. The cited lemma therefore shows that a length-$s$ program computing $F_G$ yields, in polynomial time, a vertex cover of size at most $s-|F_G|=s-|E|$. Consequently every such program satisfies
\begin{equation*}
    s\geq |E|+\tau(G).
\end{equation*}
Conversely, the same lemma applied to a minimum vertex cover gives a program of length $|E|+\tau(G)$. Hence
\begin{equation*}
    \xor(F_G)=|E|+\tau(G).
\end{equation*}
The polynomial-time extraction assertion is precisely the algorithmic part of the cited lemma. For completeness, we include an independent self-contained proof over $\F_2$ in \Cref{app:bmp-proof}.
\end{proof}

\subsection{Projection and contraction}

It is convenient to view a straight-line program as a directed acyclic XOR circuit. Its source nodes are the input variables, each internal node is the XOR of its two parents, and selected source or internal nodes are designated as outputs. The same predecessor may occupy both parent slots. For any node $g$, write $\val(g)$ for the linear form computed at $g$.

For a set $S$ of input variables, write $\pi_S$ for the coordinate projection that sends every variable in $S$ to zero and fixes all other inputs. An ordering of the nodes of a directed acyclic circuit is called a \emph{topological ordering} if every node occurs after each of its parents. Equivalently, every directed edge goes from an earlier node to a later node. Such an ordering is simply an order in which the circuit can be evaluated, and every directed acyclic circuit admits one. We shall need the following results.

\begin{lemma}[Projection-contraction]\label{lem:projection-contraction}
Let $P$ be an XOR circuit, let $\pi_S$ be a coordinate projection, and let $g$ be an internal node. Suppose there is a source or an internal node $h$ occurring earlier than $g$ in a fixed topological ordering such that
\begin{equation*}
    \pi_S(\val(g))=\pi_S(\val(h)).
\end{equation*}
After applying $\pi_S$ to the source labels, one may delete $g$, redirect every outgoing edge of $g$ to $h$, and redirect any output designation on $g$ to $h$. The resulting circuit computes all projected outputs and has one fewer internal node.
\end{lemma}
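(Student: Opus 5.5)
The proof is a direct verification on the projected circuit. Write $P'$ for the circuit obtained from $P$ by replacing every source label $u$ with $\pi_S(u)$. Sources in $S$ now carry the form $0$. Because $\pi_S$ is linear, an induction along the fixed topological ordering shows that every node $k$ of $P'$ computes $\pi_S(\val(k))$. I will then delete $g$ and perform the redirections to obtain $P''$, and show that every surviving node of $P''$ computes the same form as in $P'$.

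First, $P''$ is a legitimate XOR circuit, that is, still acyclic. Every outgoing edge of $g$ enters some node $k$ that comes after $g$ in the topological ordering. Since $h$ precedes $g$, it also precedes $k$. So the redirected edge $h\to k$ still points forward in the ordering, restricted to the surviving nodes. If $k$ had both parent slots filled by $g$, both slots now point to $h$, which the definition explicitly allows. Deleting $g$ removes exactly one internal node and no source, so the internal-node count drops by one.

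Second, I check values by induction along the ordering on the surviving nodes. Nodes before $g$ are untouched, so their values are unchanged; in particular $h$ still computes $\pi_S(\val(h))$. Now take a later surviving node $k$ with parents $a,b$ in $P'$. In $P''$, each parent slot occupied by $g$ is replaced by $h$. By hypothesis the replaced parent value is $\pi_S(\val(g))=\pi_S(\val(h))$, so the parent values are unchanged. By the induction hypothesis the other parents are also unchanged. Hence $k$ computes $\pi_S(\val(k))$.

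The outputs follow from this. Any output designation formerly on $g$ now sits on $h$, which computes the projected value of $g$. Every other output sits on an unchanged node. So all projected outputs are computed.

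The only delicate step is the acyclicity and ordering argument. It works precisely because $h$ precedes $g$ in the fixed topological order. This is what allows the induction to evaluate $h$ before any of $g$'s former children.
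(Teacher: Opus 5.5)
Your proof is correct and follows essentially the same route as the paper. Both arguments project the source labels, note that redirecting $g$'s outgoing edges to the earlier node $h$ keeps the circuit acyclic, and then check by induction along the topological ordering that every surviving node, and hence every output (including the one moved to $h$), keeps its projected value. Your explicit forward-edge argument for acyclicity and your remark about a child with both parent slots on $g$ only make the paper's argument slightly more detailed.
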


\begin{proof}
After applying $\pi_S$ to the source labels, the values computed at $g$ and $h$ are equal. Delete $g$, redirect every outgoing edge of $g$ to $h$, and redirect any output designation on $g$ to $h$. Since $h$ is a source or occurs earlier than $g$ in the chosen topological ordering, these redirections create no directed cycle.

Now examine the remaining nodes in topological order. The source values are unchanged, apart from the prescribed projection. Suppose that an internal node is reached and that every preceding node computes the same projected linear form as before. The values supplied by both of its parents are then unchanged: an incoming edge that did not originate at $g$ is unaffected, while an incoming edge formerly originating at $g$ now receives from $h$ the same projected linear form. Hence the XOR computed at the node is also unchanged. It follows by induction that every remaining node computes the same projected linear form as before. The same is therefore true of every designated output, including any output formerly designated at $g$, which is now designated at $h$. Thus the resulting circuit computes all projected outputs and has one fewer internal node.
\end{proof}

\begin{lemma}[Zero-source normalisation]\label{lem:zero-source-normalisation}
Suppose an XOR circuit has ordinary sources $U$, additional sources labelled by the zero form, and designated outputs that are nonzero forms in $U$. It can be converted, without increasing its number of internal nodes, into an XOR straight-line program over the inputs $U$ that computes the same outputs.
\end{lemma}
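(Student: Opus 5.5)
The plan is to eliminate the zero-labelled sources one at a time, and then any internal nodes that end up evaluating to the zero form, while making sure no step increases the number of internal nodes.

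First, I propagate zeros. Fix a topological ordering and consider an internal node $g$ with parents $a,b$. If one parent, say $a$, is a zero-labelled source, then $\val(g)=\val(b)$. There are two sub-cases.
\begin{itemize}
\item If $b$ is an ordinary source, an earlier internal node, or another zero source, then $g$ is redundant in the sense of \Cref{lem:projection-contraction} (with $S=\emptyset$). I delete $g$ and redirect its outgoing edges and output designations to $b$. This strictly decreases the node count and keeps the circuit acyclic, since $b$ precedes $g$.
\item More generally, any internal node with $\val(g)=0$ can be contracted onto a zero source, if one exists. If no zero source exists, it could still be contracted onto any earlier node of value $0$.
\end{itemize}
I iterate this until no internal node has value zero and no internal node has a zero-labelled parent. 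The iteration terminates because each step deletes an internal node.

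Second, I check that every nonzero internal node now has only ordinary-source or internal parents. Suppose an internal node has a zero-source parent. Then its value equals that of its other parent, so the first step applies. Suppose instead the other parent has value $0$. That parent is either a zero source, which is handled the same way, or an internal node of value $0$, which was already removed. So after termination, every surviving internal node has two parents that are ordinary sources or surviving internal nodes, and no surviving internal node computes the zero form.

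Third, the outputs are nonzero forms in $U$, so no output designation can sit on a zero source or a zero-valued node after the redirections. Every output is therefore at an ordinary source or a surviving internal node. This is the one point where the hypothesis that the outputs are nonzero is needed: if an output had value $0$, redirecting it onto a zero source would leave it with no counterpart in the program.

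Finally, I delete all zero-labelled sources, which by now have no outgoing edges. Listing the surviving internal nodes in topological order gives lines $w_t=a_t+b_t$ whose operands are available earlier. This is an XOR straight-line program over $U$ with at most the original number of lines, and it computes the same outputs.

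The main subtlety I expect is the bookkeeping that redirection never makes a parent of a later node a zero source again. This is handled by repeating the scan until a fixed point is reached, with termination guaranteed by the strictly decreasing node count.
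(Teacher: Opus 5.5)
Your argument is correct and is essentially the paper's: an internal node with a zero-valued parent is a copy of its other parent and can be bypassed, and the nonzero-output hypothesis ensures that no output is left on a zero source. The paper does this in one topological pass, building representatives $\rho(g)$ (or $\bot$) for a fresh program, while you contract in place via \Cref{lem:projection-contraction} with $S=\varnothing$ until a fixed point is reached; there is one harmless slip, in that if the circuit has no zero source at all, your claim that no surviving internal node computes the zero form can fail (e.g.\ a node $u+u$ has no earlier zero-valued node to contract onto), but the lemma is then trivial and such a line is still legal in a straight-line program.
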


\begin{proof}
Fix a topological ordering
\begin{equation*}
    g_1,\ldots,g_r
\end{equation*}
of the internal nodes, so that the parents of each $g_i$ are sources or belong to
\begin{equation*}
    \{g_1,\ldots,g_{i-1}\}.
\end{equation*}
For a straight-line program, this is simply the order of its instructions. We construct the new circuit inductively along this ordering. To each source and each already processed internal node $v$, associate either the formal null symbol $\bot$, when $\val(v)=0$, or a source or previously constructed internal node $\rho(v)$ that computes $\val(v)$. For each ordinary source $u\in U$, set $\rho(u)=u$, and associate every zero-labelled source with $\bot$.

Suppose that $g_1,\ldots,g_{i-1}$ have been processed, and let $p_i$ and $q_i$ be the parents of $g_i$. By the choice of the topological ordering, the representatives of $p_i$ and $q_i$ have already been determined. If
\begin{equation*}
    \val(p_i)=\val(q_i),
\end{equation*}
associate $g_i$ with $\bot$, since
\begin{equation*}
    \val(g_i)=\val(p_i)+\val(q_i)=0
\end{equation*}
over $\F_2$. If exactly one of $\val(p_i)$ and $\val(q_i)$ is zero, associate $g_i$ with the representative of the other parent. Otherwise, $\val(p_i)$ and $\val(q_i)$ are distinct nonzero forms. In this case, append one XOR assignment whose inputs are $\rho(p_i)$ and $\rho(q_i)$, and define $\rho(g_i)$ to be its output. This output computes
\begin{equation*}
    \val(p_i)+\val(q_i)=\val(g_i).
\end{equation*}

Proceeding through $g_1,\ldots,g_r$ therefore introduces at most one new internal node for each original internal node. Since every designated output is nonzero, it has a representative when the construction terminates. These representatives compute all the required outputs using only the sources in $U$.
\end{proof}

%% file: 3-The-reduction.tex
\section{The reduction}\label{sec:reduction}

The reduction separates a graph-independent synthesis cost from the graph-dependent term $\tau(G)$, the size of a minimum vertex cover. Let $G=(V,E)$ be a finite simple graph, and write
\[
  n=|V|,\qquad m=|E|.
\]
We construct a transformation on
\[
  N=2n+m+1
\]
labelled data wires. There is a distinguished wire $z$, two wires $q_v,x_v$ for each $v\in V$, and one wire $y_e$ for each $e\in E$. The $q_v$ and $y_e$ coordinates are arbitrary inputs with prescribed outputs; they are not ancillary workspace.

Define $A_G$ by
\begin{align}
  z'&=z,\label{eq:z-output}\\
  q_v'&=q_v+z &&(v\in V),\label{eq:q-output}\\
  x_v'&=x_v+q_v &&(v\in V),\label{eq:x-output}\\
  y_{\{u,v\}}'&=y_{\{u,v\}}+z+x_u+x_v
      &&(\{u,v\}\in E).\label{eq:y-output}
\end{align}

The construction has two interacting parts. Each edge coordinate $y_{\{u,v\}}$ must acquire the parity
\[
  f_{\{u,v\}}=z+x_u+x_v;
\]
these parities form precisely the family $F_G$ from Eq.~\eqref{eq:FG}. For each vertex, the pair $(q_v,x_v)$ provides reversible bookkeeping. In the upper-bound construction, selecting $v$ allows $x_v$ to carry $z$ temporarily to incident edge outputs. In the lower bound, the prescribed vertex outputs force two contractions per vertex, while each private edge input $y_e$ forces one further contraction.

Choose fixed orderings of $V$ and $E$. Let $B\in\F_2^{m\times n}$ be the edge--vertex incidence matrix, so that $B_{e,v}=1$ precisely when $v$ is incident with $e$, and let $\1_r$ be the all-one column vector of length $r$. Writing
\[
  Y=(y_e)_{e\in E},\qquad
  X=(x_v)_{v\in V},\qquad
  Q=(q_v)_{v\in V},
\]
the defining equations become
\begin{align*}
  Y'&=Y+BX+\1_m z,\\
  X'&=X+Q,\\
  Q'&=Q+\1_n z,\\
  z'&=z.
\end{align*}
In particular, $(BX)_{\{u,v\}}=x_u+x_v$. Hence, in the coordinate order $(Y,X,Q,z)$,
\begin{equation}\label{eq:AG}
  A_G=
  \begin{pmatrix}
    I_m & B   & 0   & \1_m\\
    0   & I_n & I_n & 0\\
    0   & 0   & I_n & \1_n\\
    0   & 0   & 0   & 1
  \end{pmatrix}.
\end{equation}

\begin{proposition}[Structural restrictions]\label{prop:structure}
The matrix $A_G$ is upper unitriangular, has exactly $4m+4n+1$ nonzero entries, and every row has Hamming weight at most four. If $M=A_G-I_N$, then
\[
  M^3=0.
\]
If $n>0$, then $M^2\ne0$ and $A_G$ has exact order four.
\end{proposition}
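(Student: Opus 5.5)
All claims follow from direct computation with the block form \eqref{eq:AG} in the coordinate order $(Y,X,Q,z)$. This order is compatible with the upper-triangular shape, since every output coordinate depends only on itself and on coordinates listed later.

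\emph{Upper unitriangularity.} The diagonal blocks are identities. All nonzero off-diagonal blocks $B$, $I_n$, $\1_m$, $\1_n$ lie strictly above the block diagonal. Hence $A_G$ is upper unitriangular.

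\emph{Counting nonzero entries.} I will count row by row.
\begin{itemize}
\item Each $Y$-row contains $1$ (diagonal) $+\,2$ (two endpoints in $B$) $+\,1$ (the $z$ entry), for weight $4$.
\item Each $X$-row has weight $2$.
\item Each $Q$-row has weight $2$.
\item The $z$-row has weight $1$.
\end{itemize}
The total is $4m+2n+2n+1=4m+4n+1$, and the maximum row weight is four.

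\emph{Nilpotency.} Write $M=A_G-I_N$ in block form:
\[
M=\begin{pmatrix}0&B&0&\1_m\\0&0&I_n&0\\0&0&0&\1_n\\0&0&0&0\end{pmatrix}.
\]
Multiplying out, the only surviving products in $M^2$ are $B\cdot I_n$ in block $(Y,Q)$ and $I_n\cdot\1_n$ in block $(X,z)$. This gives
\[
M^2=\begin{pmatrix}0&0&B&0\\0&0&0&\1_n\\0&0&0&0\\0&0&0&0\end{pmatrix}.
\]
Then $M^3=M^2M$ has a single possible term, $B\1_n$ in block $(Y,z)$. Each row of $B$ has exactly two ones, so $B\1_n=0$ over $\F_2$, and therefore $M^3=0$. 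If $n>0$, the $(X,z)$ block $\1_n$ of $M^2$ is nonzero, so $M^2\neq0$.

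\emph{Exact order.} Since $M$ commutes with $I_N$, the binomial theorem over $\F_2$ applies. We get $A_G^2=I+M^2$, because $2M=0$. Then $A_G^4=I+M^4=I$, since $M^3=0$. When $n>0$ we have $A_G^2\neq I$, so the order divides $4$ but not $2$, and is therefore exactly four.

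The only step needing care is the vanishing of $B\1_n$, which uses that $G$ is simple: every edge has exactly two distinct endpoints, so each row of $B$ has exactly two ones.
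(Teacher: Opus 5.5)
Your proposal is correct and follows essentially the same route as the paper. You both work directly from the block form in the order $(Y,X,Q,z)$, count row weights, compute $M^2$ explicitly, reduce $M^3=0$ to $B\1_n=0$ over $\F_2$, and deduce exact order four from $(I_N+M)^2=I_N+M^2\neq I_N$ and $(I_N+M)^4=I_N$.
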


\begin{proof}
Upper unitriangularity is immediate from Eq.~\eqref{eq:AG}. Every edge row has four nonzero entries; every $x_v$- and $q_v$-row has two; and the $z$-row has one. This gives $4m+4n+1=O(N)$ nonzero entries and maximum row weight four.

The strict upper part is
\[
  M=
  \begin{pmatrix}
    0 & B & 0 & \1_m\\
    0 & 0 & I_n & 0\\
    0 & 0 & 0 & \1_n\\
    0 & 0 & 0 & 0
  \end{pmatrix},
\]
so
\[
  M^2=
  \begin{pmatrix}
    0 & 0 & B & 0\\
    0 & 0 & 0 & \1_n\\
    0 & 0 & 0 & 0\\
    0 & 0 & 0 & 0
  \end{pmatrix}.
\]
The only potentially nonzero block of $M^3$ is $B\1_n$. Each row of the incidence matrix contains exactly two ones, hence $B\1_n=0$ over $\F_2$ and $M^3=0$. If $n>0$, the $(X,z)$ block $\1_n$ shows that $M^2\ne0$. In characteristic two,
\begin{align*}
  (I_N+M)^2&=I_N+M^2\ne I_N,\\
  (I_N+M)^4&=I_N+M^4=I_N.
\end{align*}
Thus $A_G$ has exact order four.
\end{proof}

\begin{theorem}[Exact distance formula]\label{thm:exact-distance}
For every finite simple graph $G=(V,E)$,
\[
  \ellCNOT(A_G)=2|V|+2|E|+\tau(G).
\]
\end{theorem}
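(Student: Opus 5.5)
The proof has two parts: an explicit circuit for the upper bound, and the projection--contraction extraction for the lower bound.

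\textbf{Upper bound.} Fix a vertex cover $C$ and assign each edge $e$ to one endpoint $u_e\in C$; write $w_e$ for its other endpoint. The circuit runs in four phases.
\begin{enumerate}
\item For every edge $e$, apply $y_e\leftarrow y_e+x_{w_e}$ while all $x$-wires still hold their raw inputs.
\item For every $v\in C$, apply $x_v\leftarrow x_v+z$.
\item For every edge $e$, apply $y_e\leftarrow y_e+x_{u_e}$, so that $y_e$ becomes $y_e+x_{w_e}+x_{u_e}+z$. The phase-1 gates are what make this correct even when both endpoints lie in $C$.
\item For every vertex $v$, apply $q_v\leftarrow q_v+z$ and then $x_v\leftarrow x_v+q_v$.
\end{enumerate}
After phase 4, $q_v$ holds $q_v+z$. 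For $v\in C$, $x_v$ holds $x_v+z+q_v+z=x_v+q_v$, since the temporary $z$ is cancelled. For $v\notin C$, $x_v$ holds $x_v+q_v+z$, so for these vertices I will swap the two phase-4 gates, applying $x_v\leftarrow x_v+q_v$ before $q_v\leftarrow q_v+z$; then $x_v$ holds $x_v+q_v$ as required. The gate count is $m+|C|+m+2n$. Taking $C$ minimum gives $\ellCNOT(A_G)\le 2n+2m+\tau(G)$.

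\textbf{Lower bound, unfolding.} Take any CNOT circuit of length $L$ implementing $A_G$. Unfold it into an XOR circuit whose sources are the $N$ input variables. Each gate $x_i\leftarrow x_i+x_j$ becomes one internal node whose parents are the current nodes of wires $i$ and $j$. There are exactly $L$ internal nodes, and the final node on each wire is designated as an output.

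\textbf{Lower bound, vertex contractions.} First project $S=\{q_v:v\in V\}$ to zero. Under $\pi_S$, the output on $q_v$ is $z$, a source, and the output on $x_v$ is $x_v$, also a source. The final node of wire $q_v$ is internal, because its value $q_v+z$ is not a source. So is the final node of wire $x_v$, because its value $x_v+q_v$ is not a source either. Lemma~\ref{lem:projection-contraction} therefore lets me contract each of these $2n$ final nodes onto the corresponding source. The nodes are distinct because they are final nodes of distinct wires.

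\textbf{Lower bound, edge contractions.} Next project all $y_e$ to zero as well. I claim that for each $e$ there is a node, namely the first node whose value involves $y_e$, that becomes redundant. Before any gate targets a wire whose value contains $y_e$, the only node involving $y_e$ is the source itself. So the first internal node $g_e$ whose unprojected value contains $y_e$ has the form $a+b$ with $y_e\in\val(a)$. I expect $\val(a)$ to be $y_e$ itself, so that after projection $g_e$ equals $b$, an earlier node, and can be contracted.

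This is the step I expect to be the main obstacle. I must check that this parent really is the source $y_e$, or an earlier node whose projected value coincides with the projection of $g_e$. I must also check that these $m$ nodes are distinct from each other and from the $2n$ vertex nodes. Distinctness among the edge nodes is the delicate part, since one gate may first touch several $y$'s. To handle it, I would argue that the wire carrying $y_e$ only mixes the $y$'s one at a time: each internal node involving several $y$'s arises from one containing fewer. Choosing $g_e$ as the first node where $y_e$ joins a value gives an injection.

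\textbf{Lower bound, conclusion.} Apply all $2n+m$ contractions in topological order via Lemma~\ref{lem:projection-contraction}. Then apply Lemma~\ref{lem:zero-source-normalisation}, which removes the zero sources and yields a straight-line program over $U_G$. It has length at most $L-2n-m$ and still outputs every $f_e=z+x_u+x_v$, since these are the projected $y_e$ outputs. Lemma~\ref{lem:bmp} then gives $L-2n-m\ge m+\tau(G)$, which is the desired lower bound.
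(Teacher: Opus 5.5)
Your upper-bound circuit is correct (it rearranges the paper's construction into phases and has the same count $2n+2m+|C|$), and your vertex contractions match the paper's. The gap is the one you flagged in the edge contractions, and the repair you sketch does not close it.

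\textbf{The gap.} Suppose each $g_e$ is chosen in one fixed circuit as the first internal node whose value has nonzero $y_e$-coefficient. Then the map $e\mapsto g_e$ need not be injective. Take any implementation of $A_G$ and prepend the gate $y_e\leftarrow y_e+y_{e'}$ twice, for edges $e\ne e'$. The resulting circuit still implements $A_G$. Its first internal node has the sources $y_e$ and $y_{e'}$ as parents and value $y_e+y_{e'}$. It is therefore simultaneously the first node involving $y_e$ and the first node involving $y_{e'}$, so your choice yields at most $m-1$ distinct deletions. Your heuristic that nodes involving several $y$'s arise from nodes involving fewer does not help, because each parent here involves exactly one private variable. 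The lower bound must hold for every implementation, including wasteful ones, so this is a genuine hole rather than a technicality.

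\textbf{The fix.} The paper avoids any injectivity argument by processing the private variables one at a time (Lemma~\ref{lem:robust-extraction}).
Project a single $y_e$. In the \emph{current} circuit, i.e.\ after all earlier projections and contractions, choose the earliest internal node with nonzero $y_e$-coefficient.
Such a node exists because the designated output on wire $y_e$ is still $y_e+f_e$, which involves $y_e$ and is not a source.
The source $y_e$ is the only earlier node involving $y_e$, and the parents' $y_e$-coefficients sum to $1$. Hence exactly one parent is that source.
After projecting $y_e$, the node therefore computes the same form as its other parent, and Lemma~\ref{lem:projection-contraction} deletes it.
Each step deletes an internal node of the circuit as it then stands. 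After $m$ steps exactly $m$ further nodes are gone, and distinctness is automatic.
In the example above, the shared node is removed when $y_e$ is processed, and a different node is then found for $y_{e'}$ in the contracted circuit.

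With this replacement, the rest of your argument (zero-source normalisation followed by Lemma~\ref{lem:bmp}) goes through as written.
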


The two directions have different content. For the upper bound, a vertex cover determines an explicit circuit in which selected vertex wires temporarily carry $z$. For the lower bound, we unfold an arbitrary CNOT circuit, remove the fixed reversible contribution by projection and contraction, and recover an XOR straight-line program for $F_G$.

\subsection{Upper bound}\label{subsec:upper-bound}

\begin{lemma}\label{lem:upper-bound}
If $C\subseteq V$ is a vertex cover, then
\[
  \ellCNOT(A_G)\le 2n+2m+|C|.
\]
\end{lemma}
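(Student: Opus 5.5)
The plan is to exhibit an explicit circuit and count its gates. Throughout, ``$a\mathrel{+}=b$'' denotes the CNOT with control $b$ and target $a$. Write each symbol for the original input on that wire. Since $C$ is a vertex cover, every edge $e$ has an endpoint in $C$. Fix one such endpoint $c(e)\in C$, and let $w(e)$ denote the other endpoint of $e$ (which may or may not lie in $C$). The circuit uses four phases, and the only delicate point is ordering them so that each control wire holds the right value when it is read.

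\emph{Phase 1} ($m$ gates): for every edge $e$, apply $y_e\mathrel{+}=x_{w(e)}$. This happens before any $x$-wire is touched, so $y_e$ receives the original $x_{w(e)}$ even when $w(e)\in C$.

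\emph{Phase 2} ($|C|$ gates): for every $v\in C$, apply $x_v\mathrel{+}=z$. Afterwards each $x_v$ with $v\in C$ holds $x_v+z$.

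\emph{Phase 3} ($m$ gates): for every edge $e$, apply $y_e\mathrel{+}=x_{c(e)}$. Since $x_{c(e)}$ now holds $x_{c(e)}+z$, the edge wire ends at $y_e+x_{w(e)}+x_{c(e)}+z$, which is exactly Eq.~\eqref{eq:y-output}. The $y$-wires are never used again.

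\emph{Phase 4} ($2n$ gates) repairs the vertex wires, using the delay of the $q$-update to cancel the stray $z$:
\begin{itemize}
\item for every $v\notin C$, apply $x_v\mathrel{+}=q_v$ while $q_v$ is still unmodified, giving $x_v+q_v$;
\item then, for every $v\in V$, apply $q_v\mathrel{+}=z$;
\item finally, for every $v\in C$, apply $x_v\mathrel{+}=q_v$. This gives $(x_v+z)+(q_v+z)=x_v+q_v$.
\end{itemize}
Thus Eqs.~\eqref{eq:q-output} and \eqref{eq:x-output} hold. The wire $z$ is never a target, so Eq.~\eqref{eq:z-output} holds.

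The total gate count is $m+|C|+m+2n=2n+2m+|C|$, as claimed. The main obstacle is the scheduling: the $z$ placed on $x_v$ for $v\in C$ must cancel in every edge whose endpoints both lie in $C$, which Phase~1 ensures. That same $z$ must also be removed from $x_v$ at no extra cost, which the late addition of $q_v+z$ achieves. The verification itself is a direct wire-by-wire check.
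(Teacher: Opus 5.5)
Your circuit is correct and is essentially the paper's construction: it uses the same gates (three per cover vertex, two per non-cover vertex, two per edge) and the same cancellation $(x_v+z)+(q_v+z)=x_v+q_v$ for $v\in C$. The only difference is scheduling. You make each $y_e$ read the unmodified $x_{w(e)}$ by doing all such additions in a global Phase~1, before any $x_v$ receives $z$, and you delay every $q_v\mathrel{+}=z$ to the end. The paper instead processes the cover vertices in a fixed order and handles each edge at its first cover endpoint, so the other endpoint's wire is still unmodified when it is read.
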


\begin{proof}
List the vertices of $C$ in a fixed order
$v_1,\ldots,v_c$, where $c=|C|$, and initially mark every
edge as unprocessed. For each wire labelled $w$, let
$w^{\mathrm{in}}$ denote its input variable, and use $w$ for
the linear form currently carried by that wire. Thus, initially,
$w=w^{\mathrm{in}}$. None of the gates in the construction
targets $z$, so
\[
  z=z^{\mathrm{in}}
\]
throughout.

For each $r=1,\ldots,c$, first apply
\begin{equation}\label{eq:upper-cover-prep}
  q_{v_r}\leftarrow q_{v_r}+z,
  \qquad
  x_{v_r}\leftarrow x_{v_r}+z.
\end{equation}
Then, for every unprocessed edge $e=\{v_r,u\}$ incident with
$v_r$, apply
\begin{equation}\label{eq:upper-edge}
  y_e\leftarrow y_e+x_{v_r},
  \qquad
  y_e\leftarrow y_e+x_u,
\end{equation}
and mark $e$ as processed.

When $e=\{v_r,u\}$ is processed, Eq.~\eqref{eq:upper-cover-prep}
has made
\[
  x_{v_r}=x_{v_r}^{\mathrm{in}}+z^{\mathrm{in}}.
\]
The other endpoint $u$ cannot be an earlier vertex in the
ordered cover, because in that case $e$ would already have
been processed. Hence $u$ either lies outside $C$ or appears
later in the ordering, and therefore
\[
  x_u=x_u^{\mathrm{in}}.
\]
Moreover, because $e$ is still unprocessed, no preceding gate
has targeted $y_e$, so $y_e=y_e^{\mathrm{in}}$ immediately
before the two gates in Eq.~\eqref{eq:upper-edge}. Consequently,
these gates produce
\[
  y_e
  =y_e^{\mathrm{in}}
   +z^{\mathrm{in}}
   +x_{v_r}^{\mathrm{in}}
   +x_u^{\mathrm{in}},
\]
which is the required output for the edge wire. Every edge is
eventually processed because $C$ is a vertex cover.

After all edge wires have been processed, apply
\begin{equation}\label{eq:upper-cover-finish}
  x_v\leftarrow x_v+q_v
  \qquad(v\in C).
\end{equation}
For each $v\in C$, Eq.~\eqref{eq:upper-cover-prep} gives
\[
  x_v=x_v^{\mathrm{in}}+z^{\mathrm{in}},
  \qquad
  q_v=q_v^{\mathrm{in}}+z^{\mathrm{in}}.
\]
Thus Eq.~\eqref{eq:upper-cover-finish} cancels the two copies
of $z^{\mathrm{in}}$ and produces
\[
  x_v=x_v^{\mathrm{in}}+q_v^{\mathrm{in}}.
\]

For every $v\notin C$, apply, in the indicated order,
\begin{equation}\label{eq:upper-noncover}
  x_v\leftarrow x_v+q_v,
  \qquad
  q_v\leftarrow q_v+z.
\end{equation}
Before these gates, neither $x_v$ nor $q_v$ has been modified:
the edge-stage gates use $x_v$ only as a control. The two gates
therefore produce
\[
  x_v=x_v^{\mathrm{in}}+q_v^{\mathrm{in}},
  \qquad
  q_v=q_v^{\mathrm{in}}+z^{\mathrm{in}}.
\]
For $v\in C$, the required value of $q_v$ was already produced
by Eq.~\eqref{eq:upper-cover-prep}. Together with the edge
outputs established above and the unchanged wire $z$, this
verifies every output equation defining $A_G$.

The construction uses three gates for each vertex in $C$, two
gates for each vertex outside $C$, and two gates for each edge.
Its total length is therefore
\[
  3|C|+2(n-|C|)+2m
  =2n+2m+|C|.
\]
\end{proof}

Taking $C$ minimum gives
\begin{equation}\label{eq:upper-final} \ellCNOT(A_G)\le 2n+2m+\tau(G). \end{equation}

\subsection{Lower bound}\label{subsec:lower-bound}

The lower bound has two steps. Unfolding represents each CNOT gate by one XOR node, and robust extraction then removes $2n+m$ forced nodes, leaving a straight-line program for $F_G$.

\begin{lemma}[Unfolding]\label{lem:unfolding}
A length-$L$ CNOT circuit on a fixed set of wires unfolds into an XOR circuit with exactly $L$ internal nodes whose designated outputs are the final wire forms.
\end{lemma}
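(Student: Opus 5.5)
The plan is a direct induction along the gate sequence. Suppose the circuit consists of gates $x_{i_t}\leftarrow x_{i_t}+x_{j_t}$ for $t=1,\ldots,L$, acting on wires indexed by $[N]$.

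\textbf{Construction.} Start with one source node for each wire. The source for wire $i$ is labelled by its input variable, and we set $\mathrm{cur}(i)$ to be this source. Processing gate $t$ does three things:
\begin{enumerate}
  \item create a new internal node $g_t$;
  \item give $g_t$ the two parents $\mathrm{cur}(i_t)$ and $\mathrm{cur}(j_t)$;
  \item update $\mathrm{cur}(i_t)\leftarrow g_t$, leaving every other pointer unchanged.
\end{enumerate}
Each gate contributes exactly one node, so there are exactly $L$ internal nodes. Since $i_t\ne j_t$, the two parents are pointers for distinct wires. After all gates have been processed, the nodes $\mathrm{cur}(i)$ for $i\in[N]$ are designated as outputs.

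\textbf{Acyclicity.} Both parents of $g_t$ are sources or nodes $g_s$ with $s<t$. Hence every edge points forward in the order sources, $g_1,\ldots,g_L$. This order is topological and the graph is a DAG.

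\textbf{Correctness.} The key invariant is that, after gate $t$, $\val(\mathrm{cur}(i))$ equals the linear form on wire $i$ in the circuit. This holds initially, because each source carries its input variable. For the inductive step, gate $t$ replaces the form on wire $i_t$ by the sum of the current forms on wires $i_t$ and $j_t$. This sum is exactly $\val(g_t)=\val(\mathrm{cur}(i_t))+\val(\mathrm{cur}(j_t))$, and no other wire changes. Therefore, at the end, the designated outputs compute the final wire forms.

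\textbf{Remarks.} There is no real obstacle here; the main care is bookkeeping. Two points deserve attention:
\begin{itemize}
  \item A wire that is never targeted has its source node as its output. This is permitted, since designated outputs may be sources.
  \item The count is exact because no node is merged or shared. In particular, repeated or cancelling gates still each produce a fresh node, which is what the later projection--contraction argument needs. The construction makes no cancellation-free assumption.
\end{itemize}
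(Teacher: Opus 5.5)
Your proposal is correct and follows essentially the paper's proof: one source per wire, a fresh XOR node for each gate with the current target and control nodes as parents, redirection of the target pointer, and induction over the gate sequence showing that each pointer carries the current wire form. Your explicit acyclicity check and your remark that untargeted wires have source outputs are small additions that the paper leaves implicit.
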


\begin{proof}
Use one source for each initial wire. Whenever a CNOT replaces a target form $r$ by $r+s$, create a fresh XOR node with the current target and control nodes as parents, and redirect the target wire to that node. Exactly one node is created per gate. Induction over the gate sequence shows that every wire pointer always carries its current form.
\end{proof}

The following lemma isolates the main lower-bound mechanism. For the ancilla-free case, one may read it with $\mathcal A=\varnothing$ and $\lambda_v=\mu_v=\nu_e=0$. We state the more general form because the same extraction will later apply to circuits with clean ancillary wires.

\begin{lemma}[Robust extraction]\label{lem:robust-extraction}
Let $P$ be an XOR circuit with $L$ internal nodes and sources
\[
  \{z\}\cup\{x_v,q_v:v\in V\}\cup\{y_e:e\in E\}\cup\mathcal A,
\]
where $\mathcal A$ is any finite set of additional variables. Suppose that its designated outputs include
\begin{align*}
  \widetilde a_v&=q_v+z+\lambda_v,\\
  \widetilde b_v&=x_v+q_v+\mu_v,\\
  \widetilde g_{\{u,v\}}&=y_{\{u,v\}}+z+x_u+x_v+\nu_{\{u,v\}},
\end{align*}
where $\lambda_v,\mu_v$, and $\nu_e$ are linear forms in the variables $\mathcal A$. In polynomial time, $P$ can be transformed into an XOR straight-line program for $F_G$ of length at most
\[
  L-2n-m.
\]
\end{lemma}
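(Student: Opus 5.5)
The plan is to apply the projection $\pi_S$ with $S=\mathcal A\cup\{q_v:v\in V\}\cup\{y_e:e\in E\}$ in stages. At each stage I use Lemma~\ref{lem:projection-contraction} to delete one internal node for each forced coincidence. At the end I invoke Lemma~\ref{lem:zero-source-normalisation}, since all projected sources become zero-labelled. Under $\pi_S$ the designated outputs become
\[
\pi_S(\widetilde a_v)=z,\qquad \pi_S(\widetilde b_v)=x_v,\qquad \pi_S(\widetilde g_e)=f_e .
\]
So the final circuit computes exactly $F_G$ over the ordinary sources $\{z\}\cup\{x_v\}$, and every $f_e$ is nonzero as the normalisation lemma requires.

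\emph{Vertex stage.} First project $\mathcal A\cup\{q_v\}$; this deletes nothing. Before projection, the $2n$ forms $\widetilde a_v=q_v+z+\lambda_v$ and $\widetilde b_v=x_v+q_v+\mu_v$ are pairwise distinct, because their $q$- and $x$-supports differ. None of them is a single source variable. Hence they are designated at $2n$ distinct internal nodes. After projection, the node for $\widetilde a_v$ has the same value as the source $z$, and the node for $\widetilde b_v$ has the same value as the source $x_v$. Sources precede every internal node in the topological order, so Lemma~\ref{lem:projection-contraction} lets me contract these $2n$ nodes one after another. Contracting one node only redirects edges and designations to a source. It therefore never merges or deletes another still-designated node, and the $2n$ deletions are genuinely distinct.

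\emph{Edge stage.} Here I expect the main obstacle: the naive choice of ``first node depending on $y_e$'' might coincide for two edges, for instance a node with value $y_e+y_{e'}$. To avoid double counting, I will project the edge variables one at a time and recompute the relevant node in the current circuit at each step. Suppose the $y_{e'}$ for earlier edges have already been projected and contracted. The circuit still carries the designated output $g_e$, whose current value still contains $y_e$ by Lemma~\ref{lem:projection-contraction}, which preserves the current projected values. Let $h$ be the earliest internal node in topological order whose value contains $y_e$. Neither parent of $h$ can be an earlier internal node containing $y_e$. Since $y_e$ lies in the support of $\val(h)$, one parent must be the source $y_e$ and the other parent $p$ has a value free of $y_e$. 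The parents are distinct, because equal parents give value $0$. Thus $\val(h)=y_e+\val(p)$, and after additionally projecting $y_e$ we get $\val(h)=\val(p)$ with $p$ earlier. Contracting $h$ into $p$ deletes one node and preserves all projected designated outputs, with any designation on $h$ moved to $p$. Repeating this for all $m$ edges deletes $m$ further nodes, each in a circuit that has already lost the earlier ones, so there is no overlap issue at all.

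After both stages the circuit has at most $L-2n-m$ internal nodes. Its sources are $z$, the $x_v$, and zero-labelled sources, and it computes every $f_e$. Lemma~\ref{lem:zero-source-normalisation} then turns it into an XOR straight-line program for $F_G$ of length at most $L-2n-m$. Every step is polynomial time. Finding designated nodes, evaluating linear forms, locating the earliest $y_e$-dependent node, and redirecting edges are all linear-algebra or graph operations on a circuit of size $O(L+N+|\mathcal A|)$.
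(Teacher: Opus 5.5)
Your proposal is correct and follows essentially the same route as the paper. You project the $q_v$ and $\mathcal A$, contract the $2n$ distinct vertex-output nodes into the sources $z$ and $x_v$, then process each $y_e$ sequentially by contracting the earliest $y_e$-dependent internal node into its $y_e$-free parent, and finish with zero-source normalisation. The one detail you should add, as the paper does explicitly, is to discard every output designation other than those for $F_G$ before invoking Lemma~\ref{lem:zero-source-normalisation}. The hypothesis only says the designated outputs \emph{include} the displayed forms, and other outputs (for example ancillary outputs in the clean-ancilla application) may project to zero, which would violate that lemma's requirement that all designated outputs be nonzero forms in $\{z\}\cup\{x_v\}$.
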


\begin{proof}
Discard every output designation except the displayed vertex and edge outputs. This changes neither the circuit nor its number of internal nodes, and it is harmless because the conclusion only requires a straight-line program computing $F_G$. We perform three stages, each preserving the currently projected retained outputs.

\emph{Vertex projections.}
Project every $q_v$ and every variable in $\mathcal A$ to zero. Before projection, the $2n$ displayed vertex-output forms are pairwise distinct and none is a source, because their data-variable parts are pairwise distinct sums of two variables. They therefore occupy $2n$ distinct internal nodes. After projection,
\[
  \widetilde a_v\longmapsto z,
  \qquad
  \widetilde b_v\longmapsto x_v.
\]
Contract each of those nodes to the corresponding, earlier source using Lemma~\ref{lem:projection-contraction}. This deletes exactly $2n$ internal nodes. Each edge output now has the form
\[
  y_e+f_e.
\]

\emph{Private-edge projections.}
Process the variables $y_e$ sequentially. Before $y_e$ is projected, its designated output $y_e+f_e$ still depends on it and differs from the source $y_e$, so an internal node with nonzero $y_e$ coefficient exists. In a topological ordering of the current projected-and-contracted circuit, choose the earliest such internal node. The source $y_e$ is the only earlier node with a nonzero $y_e$ coefficient; hence exactly one parent of the chosen node is that source, while the other parent computes a form $h$ independent of $y_e$. After projecting $y_e$ to zero, the chosen node and the other parent both compute $h$, so Lemma~\ref{lem:projection-contraction} deletes the chosen node. The designated edge output becomes $f_e$; the other edge-output forms are unchanged because they do not contain the private variable $y_e$. The argument remains valid even if other computations used cancelling copies of $y_e$, because projection--contraction preserves every designated projected output. Repeating this step for all $e\in E$ deletes $m$ pairwise distinct internal nodes and leaves $F_G$ among the retained designated outputs.

\emph{Normalisation.} 
Discard every output designation except those corresponding to the forms in $F_G$. This changes neither the circuit nor its number of internal nodes and is harmless because the conclusion only requires a straight-line program computing $F_G$. All sources in $\{q_v:v\in V\}\cup\{y_e:e\in E\}\cup\mathcal A$ now carry the zero form, while every retained designated output is a nonzero form in the surviving sources $U=\{z\}\cup\{x_v:v\in V\}$. Lemma~\ref{lem:zero-source-normalisation} therefore removes the zero sources without increasing the number of internal nodes and yields an XOR straight-line program over $z$ and the $x_v$ that computes $F_G$. The accounting is
\begin{align*}
  L&\longrightarrow L-2n
      &&\text{after the vertex contractions},\\
   &\longrightarrow L-2n-m
      &&\text{after the private-edge contractions}.
\end{align*}
All projections and contractions can be implemented using coefficient vectors and a maintained topological ordering, so the transformation runs in polynomial time.
\end{proof}

\begin{lemma}[Projected lower bound]\label{lem:projected-lower-bound}
If a CNOT circuit implementing $A_G$ has length $L$, then
\[
  L\ge 2n+m+\xor(F_G)=2n+2m+\tau(G).
\]
\end{lemma}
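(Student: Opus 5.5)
The plan is to chain the three tools already established: unfolding, robust extraction, and the Boyar--Matthews--Peralta identity.

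Start from an arbitrary CNOT circuit of length $L$ on the $N=2n+m+1$ labelled wires that implements $A_G$. By Lemma~\ref{lem:unfolding}, this circuit unfolds into an XOR circuit $P$ with exactly $L$ internal nodes. Its sources are the input variables $z$, $x_v$, $q_v$ ($v\in V$) and $y_e$ ($e\in E$), and its designated outputs are the final wire forms. Because the circuit implements $A_G$, Eqs.~\eqref{eq:z-output}--\eqref{eq:y-output} say that these final forms are exactly
\[
z,\quad q_v+z,\quad x_v+q_v,\quad y_{\{u,v\}}+z+x_u+x_v .
\]

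These outputs are precisely the hypotheses of Lemma~\ref{lem:robust-extraction}, taken with $\mathcal A=\varnothing$ and $\lambda_v=\mu_v=\nu_e=0$. The lemma therefore turns $P$ into an XOR straight-line program for $F_G$ of length at most $L-2n-m$. By the definition of $\xor$, this gives
\[
\xor(F_G)\le L-2n-m,
\quad\text{that is,}\quad
L\ge 2n+m+\xor(F_G).
\]

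Finally, Lemma~\ref{lem:bmp} gives $\xor(F_G)=m+\tau(G)$, so $2n+m+\xor(F_G)=2n+2m+\tau(G)$. This is exactly the claimed lower bound.

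Essentially all of the difficulty sits inside Lemma~\ref{lem:robust-extraction}. The one point to check here is that the correct hypotheses are matched. The designated outputs must include the vertex and edge output forms, and they do, since unfolding designates every final wire form. The extra $z$ output is harmless, since the extraction lemma discards any designations it does not need.

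Combined with Eq.~\eqref{eq:upper-final}, this lemma yields Theorem~\ref{thm:exact-distance}.
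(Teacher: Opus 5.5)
Your proposal is correct and matches the paper's proof: you unfold via Lemma~\ref{lem:unfolding}, apply Lemma~\ref{lem:robust-extraction} with $\mathcal A=\varnothing$ to obtain a program for $F_G$ of length at most $L-2n-m$, and conclude with Lemma~\ref{lem:bmp}. You also check explicitly that the unfolded final wire forms satisfy the extraction lemma's hypotheses, and that the extra $z$ output is discarded without harm; the paper leaves both points implicit.
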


\begin{proof}
Unfold the circuit using Lemma~\ref{lem:unfolding} and apply Lemma~\ref{lem:robust-extraction} with $\mathcal A=\varnothing$. The resulting program for $F_G$ has length at most $L-2n-m$, so $L-2n-m\ge\xor(F_G)$. Lemma~\ref{lem:bmp} gives $\xor(F_G)=m+\tau(G)$.
\end{proof}

The lower bound in Lemma~\ref{lem:projected-lower-bound} and the upper bound in Eq.~\eqref{eq:upper-final} coincide, proving Theorem~\ref{thm:exact-distance}.

The exact identity proves the core reduction. We now record two strengthenings needed for the complexity consequences: constructive decoding from arbitrary implementations and robustness to ancillary wires.

\begin{proposition}[Constructive decoder]\label{prop:decoder}
Given $G$ and a length-$L$ CNOT circuit implementing $A_G$, one can construct in polynomial time a vertex cover $C$ of $G$ satisfying
\[
  |C|\le L-2n-2m.
\]
\end{proposition}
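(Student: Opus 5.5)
The decoder is the composition of three polynomial-time transformations already established above, so the plan is to chain them and track the length. Let a CNOT circuit of length $L$ implementing $A_G$ be given.

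\emph{Step 1: unfold.} By Lemma~\ref{lem:unfolding}, unfold the gate sequence into an XOR circuit $P$ with exactly $L$ internal nodes. Its designated outputs are the final wire forms $z$, $q_v+z$, $x_v+q_v$ and $y_{\{u,v\}}+z+x_u+x_v$. This takes time linear in $L$ and $N$ once coefficient vectors are maintained. It matches the hypotheses of Lemma~\ref{lem:robust-extraction} with $\mathcal A=\varnothing$ and $\lambda_v=\mu_v=\nu_e=0$.

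\emph{Step 2: extract.} Apply Lemma~\ref{lem:robust-extraction}. Its proof is explicitly algorithmic: the $2n$ vertex contractions, the $m$ private-edge contractions and the zero-source normalisation of Lemma~\ref{lem:zero-source-normalisation} are each carried out by coefficient-vector comparisons along a maintained topological ordering. The result, in polynomial time, is an XOR straight-line program for $F_G$ of length $s\le L-2n-m$.

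\emph{Step 3: decode the cover.} Invoke the algorithmic part of Lemma~\ref{lem:bmp}, which turns this program into a vertex cover $C$ with
\[
|C|\le s-m\le L-2n-2m.
\]

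The only point needing care is that every step is genuinely effective and polynomial in the input size $|G|+L$, rather than merely existential. Two checks cover this.
\begin{itemize}
\item In the private-edge stage, the earliest internal node with nonzero $y_e$ coefficient must be locatable. This is a linear scan over the current topological order.
\item Each contraction must update the ordering and the coefficient vectors in polynomial time. Redirecting edges to an earlier node preserves the existing order, so no re-sorting is needed, and all vectors have length $N+|\mathcal A|$.
\end{itemize}
With these checks the composition runs in polynomial time. I expect this bookkeeping, together with citing the algorithmic part of Boyar--Matthews--Peralta (or its self-contained version in \Cref{app:bmp-proof}), to be the main point to argue. Everything else is accounting: $L-2n-m-m=L-2n-2m$.
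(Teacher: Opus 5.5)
Your proposal is correct and follows the paper's proof: unfold via Lemma~\ref{lem:unfolding}, apply the polynomial-time extraction of Lemma~\ref{lem:robust-extraction} with $\mathcal A=\varnothing$ to obtain a program for $F_G$ of length $s\le L-2n-m$, then apply the algorithmic part of Lemma~\ref{lem:bmp} to get $|C|\le s-m\le L-2n-2m$. Your bookkeeping remarks are accurate elaborations of the paper's one-line claim that the extraction runs in polynomial time: redirecting edges to an earlier node preserves the topological order, and the earliest $y_e$-dependent node can be found by a linear scan.
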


\begin{proof}
Unfold the circuit and invoke the polynomial-time transformation in Lemma~\ref{lem:robust-extraction} with $\mathcal A=\varnothing$. It produces a straight-line program for $F_G$ of length $s\le L-2n-m$. The algorithmic part of Lemma~\ref{lem:bmp} returns a vertex cover of size at most $s-m\le L-2n-2m$.
\end{proof}

\begin{proposition}[Ancillary wires do not help]\label{prop:ancillas}
For every graph $G$ and every $r\ge0$,
\begin{align}\label{eq:ancilla-equality}
  \ellcleanr{r}(A_G)
  &=\ellborrowr{r}(A_G)=\ellCNOT(A_G)\notag\\
  &=2n+2m+\tau(G).
\end{align}
The same value is obtained after minimising over an arbitrary finite number of clean or borrowed ancillary wires. Moreover, the decoder of Proposition~\ref{prop:decoder} applies in polynomial time to circuits in either ancillary model.
\end{proposition}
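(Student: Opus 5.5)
By the chain of inequalities \eqref{eq:ancilla-basic-ineq}, $\ellcleanr{r}(A_G)\le\ellborrowr{r}(A_G)\le\ellCNOT(A_G)$, and Theorem~\ref{thm:exact-distance} gives $\ellCNOT(A_G)=2n+2m+\tau(G)$. So it suffices to prove the lower bound $\ellcleanr{r}(A_G)\ge 2n+2m+\tau(G)$ for every $r$. Minimising over $r$ then gives the statement about $\ellclean$ and $\ellborrow$, since every term of the minimum equals the same value.

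For the clean lower bound, let $U\in\GL(N+r,2)$ be a length-$L$ CNOT circuit on $N+r$ wires with $U(x,0^r)=(Ax,0^r)$, where $A=A_G$. Unfold it by Lemma~\ref{lem:unfolding} into an XOR circuit $P$ with $L$ internal nodes. Its sources are the $N$ data variables together with a set $\mathcal A$ of $r$ ancilla variables, and its designated outputs are the final wire forms $U(x,a)$ as linear forms in $(x,a)$.

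The key observation is this. Restricting to $a=0$ gives the data output forms of $A_G$, so each data output of $U$ equals the corresponding $A_G$-form plus some linear form in the variables of $\mathcal A$ only. Thus the designated outputs on the wires $q_v$, $x_v$ and $y_e$ have exactly the shape $\widetilde a_v,\widetilde b_v,\widetilde g_e$ required in Lemma~\ref{lem:robust-extraction}, with $\lambda_v,\mu_v,\nu_e$ linear in $\mathcal A$.

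Applying Lemma~\ref{lem:robust-extraction} yields an XOR straight-line program for $F_G$ of length at most $L-2n-m$. Lemma~\ref{lem:bmp} then gives $L-2n-m\ge m+\tau(G)$, which is the desired bound. The only point that needs checking is that the lemma's hypotheses are stated purely in terms of output forms, so arbitrary ancilla dependence is allowed. This is exactly why the lemma was stated in that generality, so I expect no real obstacle here.

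The decoder statement follows the same route. In the clean model, unfold the circuit, run the polynomial-time transformation of Lemma~\ref{lem:robust-extraction}, and apply the algorithmic part of Lemma~\ref{lem:bmp}. This produces a vertex cover of size at most $L-2n-2m$, exactly as in Proposition~\ref{prop:decoder}.

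In the borrowed model, a borrowed-$r$ implementation is in particular a clean-$r$ implementation of the same length, because the defining identity specialises at $a=0$. The identical procedure therefore applies without modification. No setting of the ancilla inputs is even needed: the extraction lemma already projects $\mathcal A$ to zero.
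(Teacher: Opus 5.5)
Your proposal is correct and follows essentially the same route as the paper. You reduce to the clean lower bound via Eq.~\eqref{eq:ancilla-basic-ineq}, observe that the clean condition forces each data output to be the corresponding row of $A_G$ plus a linear form in the ancilla variables (the paper phrases this through the block form $U=\begin{pmatrix}A_G&R\\0&S\end{pmatrix}$), and then apply Lemma~\ref{lem:robust-extraction} with the ancillas as $\mathcal A$ followed by Lemma~\ref{lem:bmp}, obtaining the decoder by composing the same extraction with the algorithmic part of Lemma~\ref{lem:bmp} and treating borrowed circuits as a special case of clean ones.
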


\begin{proof}
The ancilla-free upper-bound circuit ignores the ancillary wires, so Eq.~\eqref{eq:ancilla-basic-ineq} supplies the upper bounds. It remains to prove the clean lower bound, since every borrowed implementation is clean.

Let a clean-$r$ circuit have length $L$ and full linear transformation $U$. With data coordinates first, the clean condition implies the block form
\[
  U=
  \begin{pmatrix}
    A_G & R\\
    0   & S
  \end{pmatrix}.
\]
Consequently, on arbitrary ancillary inputs $\alpha$, the designated data outputs are the corresponding rows of $A_G$ plus linear forms in $\alpha$. Unfolding the circuit therefore gives exactly the situation of Lemma~\ref{lem:robust-extraction}, with the ancillary variables as $\mathcal A$. That lemma and Lemma~\ref{lem:bmp} imply
\[
  L\ge2n+m+\xor(F_G)=2n+2m+\tau(G).
\]
Composing the same extraction with the decoder in Lemma~\ref{lem:bmp} proves the polynomial-time claim. Thus the clean lower bound holds, and therefore so does the borrowed lower bound.
\end{proof}

%% file: 4-Complexity-consequences.tex
\section{Complexity consequences}\label{sec:complexity}

The exact-distance identity transfers several complexity properties of
\textsc{Vertex Cover} to CNOT synthesis. We first prove the decision result and
its ancillary variants, and then derive hardness for exact computation,
fixed-additive approximation, and multiplicative approximation.

\subsection{Decision and exact computation}

We begin with the decision problem stated in Theorem~\ref{thm:main}. The exact-distance formula gives the reduction directly. The remaining points are to preserve the promised structural restrictions and to show that polynomial-size certificates exist even when the gate budget is encoded in binary.

\begin{proof}[Proof of Theorem~\ref{thm:main}]
For NP-hardness, reduce from \textsc{Vertex Cover} on finite simple graphs,
which is NP-complete~\cite{Karp1972}. Let $(G,k)$ be an instance and
write $n_0=|V(G)|$.

If $k>n_0$, the instance is trivially positive. In this case, output the
fixed positive CNOT instance obtained from the one-vertex edgeless graph
$G_{\mathrm{yes}}$, with
\[
  N_{\mathrm{yes}}=3,
  \qquad
  K_{\mathrm{yes}}=2.
\]
Theorem~\ref{thm:exact-distance} gives
\[
  \ellCNOT(A_{G_{\mathrm{yes}}})=2.
\]
Proposition~\ref{prop:structure} shows that this instance satisfies
restrictions (i)--(iii), and
\[
  K_{\mathrm{yes}}=2<6=2N_{\mathrm{yes}}.
\]

We may therefore assume that $0\le k\le n_0$. If $V(G)=\varnothing$,
add one isolated vertex. This preserves $\tau(G)$ and the answer while
ensuring that the resulting graph has at least one vertex. Write
\[
  n=|V(G)|,
  \qquad
  m=|E(G)|
\]
for the resulting graph, construct $A_G$, and set
\[
  K=2n+2m+k.
\]
The matrix and the budget are computable in polynomial time.
Theorem~\ref{thm:exact-distance} gives
\[
  \tau(G)\le k
  \quad\Longleftrightarrow\quad
  \ellCNOT(A_G)\le K.
\]
Proposition~\ref{prop:structure} supplies restrictions (i)--(iii).
Because $k\le n$ and $n\ge1$,
\[
  K\le3n+2m<4n+2m+2=2N,
\]
which proves restriction (iv).

It remains to establish membership in NP. A positive instance is
necessarily invertible, and invertibility can be checked in polynomial
time. Gauss--Jordan elimination reduces every
$A\in\GL(N,2)$ to $I_N$ using at most $N(N-1)$ row additions and at
most $N$ row swaps. Over $\F_2$, a row swap is implemented by three row
additions. Thus every invertible matrix has a CNOT implementation of
length at most
\[
  N(N-1)+3N=N^2+2N,
\]
which is $O(N^2)$. Since every row addition is an involution, reversing
the elimination sequence implements $A$.

Consequently, if $(A,K)$ is a yes-instance and $L$ is the length of a
shortest implementation, then
\[
  L\le\min\{K,N^2+2N\}.
\]
A polynomial-size certificate therefore exists even when $K$ is
encoded in binary. The certificate is a list of ordered
control--target pairs. The verifier checks that its length is at most
$K$, applies the corresponding row additions to $I_N$, and compares
the resulting matrix with $A$.
\end{proof}

The same reduction establishes hardness in the ancillary models because Proposition~\ref{prop:ancillas} shows that ancillary wires do not reduce the optimum on the constructed instances. For completeness, we also verify that allowing an unspecified number of ancillary wires does not obstruct membership in NP.

\begin{corollary}[Ancillary decision problems]
\label{cor:ancilla-np}
The clean- and borrowed-ancilla decision problems are NP-complete under
restrictions (i)--(iv) of Theorem~\ref{thm:main}, even when an
implementation may choose any finite number of ancillary wires.
\end{corollary}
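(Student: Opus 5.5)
Hardness follows from the reduction in the proof of Theorem~\ref{thm:main}, with no new circuit argument. The same map $(G,k)\mapsto(A_G,K)$ with $K=2n+2m+k$ is used, including the trivial-yes instance when $k>n_0$ and the isolated-vertex padding. By Proposition~\ref{prop:ancillas}, for every $r\ge0$,
\[
\ellcleanr{r}(A_G)=\ellborrowr{r}(A_G)=2n+2m+\tau(G).
\]
Hence also $\ellclean(A_G)=\ellborrow(A_G)=2n+2m+\tau(G)$. Therefore $\tau(G)\le k$ if and only if $\ellclean(A_G)\le K$, if and only if $\ellborrow(A_G)\le K$. Restrictions (i)--(iv) concern only $A_G$ and $K$, so they are inherited verbatim from Proposition~\ref{prop:structure} and the budget estimate $K<2N$.

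The substantive step is membership in NP, because an implementation may use an unbounded number $r$ of ancillary wires. I would bound the useful number of ancillas and the circuit length. First, by Eq.~\eqref{eq:ancilla-basic-ineq} and the Gauss--Jordan bound from the proof of Theorem~\ref{thm:main},
\[
\ellclean(A)\le\ellborrow(A)\le\ellCNOT(A)\le N^2+2N.
\]
So if $(A,K)$ is a yes-instance, an optimal witness has length $L\le\min\{K,N^2+2N\}$.

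Second, a circuit of length $L$ touches at most $2L$ wires. Any ancillary wire never touched by a gate can be deleted: it is restored trivially and does not affect the data wires. A touched ancilla can also be relabelled. Hence we may assume $r\le 2L\le 2(N^2+2N)$. The certificate is then the number $r$ together with a list of at most $L$ control--target pairs on $N+r$ wires, which has polynomial size.

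The verifier checks three things:
\begin{enumerate}
\item $L\le K$;
\item the circuit's matrix $U\in\GL(N+r,2)$, computed by applying the row additions to $I_{N+r}$;
\item the ancilla condition on $U$.
\end{enumerate}
For the clean model, the condition is $U(x,0^r)=(Ax,0^r)$ for all $x$. This means the first $N$ columns of $U$ equal $\begin{pmatrix}A\\0\end{pmatrix}$. For the borrowed model, the condition is $U=\begin{pmatrix}A&0\\0&I_r\end{pmatrix}$. Both checks are direct comparisons of polynomially many entries.

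The main obstacle I anticipate is only this wire-pruning argument, i.e.\ justifying that untouched ancillas can be discarded without changing either condition. It is immediate, because an untouched wire contributes an identity block and zero off-diagonal blocks to $U$.
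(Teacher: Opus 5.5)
Your proposal is correct and follows the paper's proof essentially step for step. Hardness comes from the same reduction combined with Proposition~\ref{prop:ancillas}, and membership in NP comes from the Gauss--Jordan bound $L\le\min\{K,N^2+2N\}$, pruning and relabelling ancillary wires to get $r\le 2L\le 2(N^2+2N)$, and checking the block conditions on $U$: the first $N$ columns equal $A$ stacked over $0$ in the clean model, and $U=\operatorname{diag}(A,I_r)$ in the borrowed model.
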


\begin{proof}
Proposition~\ref{prop:ancillas}, together with the reduction above,
gives NP-hardness.

For membership in NP, let $L$ be the length of a shortest valid
implementation of a yes-instance. Since an ancilla-free elimination
circuit is available in either model,
\[
  L\le\min\{K,N^2+2N\}.
\]
A length-$L$ circuit mentions at most $2L$ ancillary-wire labels.
Deleting every unmentioned ancillary wire and relabelling the remaining
ones consecutively gives an equivalent certificate with
\[
  r\le2L\le2(N^2+2N).
\]
The certificate consists of $r$ and the gate list.

From the gate list, the verifier computes the full transformation $U$
on the $N+r$ wires and checks that the list has length at most $K$.
With the data coordinates listed first, the clean condition is
equivalent to
\[
  U=
  \begin{pmatrix}
    A&R\\
    0&S
  \end{pmatrix}
\]
for some blocks $R$ and $S$. The borrowed condition is equivalent to
\[
  U=\operatorname{diag}(A,I_r).
\]
These conditions, the wire indices, and the gate count can all be
checked in time polynomial in the certificate size.
\end{proof}

The decision reduction also yields hardness of computing the optimum itself. Here only the matrix restrictions (i)--(iii) are relevant, because restriction (iv) concerns the decision budget $K$. 

\begin{corollary}[Exact computation]\label{cor:exact-hard}
Computing $\ellCNOT(A)$ exactly is NP-hard, even for matrices satisfying
restrictions (i)--(iii) of Theorem~\ref{thm:main}. The same holds for
the clean- and borrowed-ancilla optima. Equivalently, computing the
distance from the identity in
\[
  \operatorname{Cay}(\GL(N,2),\Sigma_N)
\]
is NP-hard when the target matrix is given explicitly by its entries.
\end{corollary}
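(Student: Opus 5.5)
The plan is to reduce to the decision problem already handled in the proof of Theorem~\ref{thm:main}. Any algorithm that outputs $\ellCNOT(A)$ also decides $\CNOTDistance$: compute the value and compare it with $K$. So a polynomial-time procedure for the exact value on matrices satisfying restrictions (i)--(iii) would give a polynomial-time procedure for \textsc{Vertex Cover}.

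\emph{Direct reduction.} Given a \textsc{Vertex Cover} instance $(G,k)$, first add an isolated vertex if $V(G)=\varnothing$; this changes neither $\tau(G)$ nor the answer. Then build $A_G$, which by Proposition~\ref{prop:structure} satisfies restrictions (i)--(iii). Call the hypothetical exact-computation oracle once on $A_G$, obtaining $\ellCNOT(A_G)$. By Theorem~\ref{thm:exact-distance},
\[
  \tau(G)=\ellCNOT(A_G)-2|V|-2|E|,
\]
so we accept exactly when this quantity is at most $k$.

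\emph{Ancillary optima.} Proposition~\ref{prop:ancillas} gives $\ellcleanr{r}(A_G)=\ellborrowr{r}(A_G)=2n+2m+\tau(G)$ for every $r\ge 0$. The same value holds after minimising over $r$, so
\[
  \ellclean(A_G)=\ellborrow(A_G)=2n+2m+\tau(G).
\]
Hence an oracle for either ancillary optimum also recovers $\tau(G)$ by the same subtraction.

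\emph{Cayley-graph formulation.} Eq.~\eqref{eq:cnot-length} identifies $\ellCNOT(A)$ with the distance from $I_N$ to $A$ in $\Cay(\GL(N,2),\Sigma_N)$. The Cayley-graph statement is therefore a restatement of the first claim.

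There is essentially no obstacle. The only point to state explicitly is the notion of hardness: NP-hardness of a function problem here means a polynomial-time Turing reduction, in fact a single oracle call followed by an arithmetic comparison, from an NP-complete problem. The oracle returns an integer of size at most $N^2+2N$, so its output has polynomially many bits and the subtraction is polynomial time.
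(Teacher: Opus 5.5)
Your proposal is correct and is essentially the paper's proof: add an isolated vertex if necessary, use Theorem~\ref{thm:exact-distance} to recover $\tau(G)=\ellCNOT(A_G)-2|V(G)|-2|E(G)|$, and invoke Eq.~\eqref{eq:ancilla-equality} for the clean and borrowed optima. Your remarks on the single-call Turing-reduction notion of hardness and on the Cayley-graph restatement via Eq.~\eqref{eq:cnot-length} make explicit what the paper leaves implicit.
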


\begin{proof}
Exact minimum vertex-cover size remains NP-hard when the input graph is
required to have at least one vertex, since an isolated vertex may be
added without changing the optimum. For every such graph,
Theorem~\ref{thm:exact-distance} gives
\[
  \tau(G)
  =\ellCNOT(A_G)-2|V(G)|-2|E(G)|.
\]
Thus exact CNOT-distance computation would compute $\tau(G)$.
Equation~\eqref{eq:ancilla-equality} gives the same conclusion for
$\ellclean(A)$ and $\ellborrow(A)$.
\end{proof}

\subsection{Approximation consequences}

The exact identity and the constructive decoder yield two different forms of approximation hardness. First, disjoint-union amplification rules out every fixed additive error. Second, the decoder converts the excess length of an arbitrary CNOT implementation into the excess size of a vertex cover, providing the solution map required for an L-reduction.

\begin{corollary}[Fixed additive approximation]
\label{cor:additive-hard}
For every fixed integer $c\ge0$, it is NP-hard, given an explicitly
represented matrix $A\in\GL(N,2)$, to output a polynomially encoded
rational number $d(A)$ satisfying
\[
  |d(A)-\ellCNOT(A)|\le c,
\]
even for matrices satisfying restrictions (i)--(iii) of
Theorem~\ref{thm:main}. The same conclusion holds with $\ellCNOT$
replaced by $\ellclean$ or $\ellborrow$.

In particular, it is NP-hard to output a feasible implementation whose
length is at most the corresponding optimum plus $c$.
\end{corollary}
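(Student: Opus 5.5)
We use disjoint-union amplification. Fix $c\ge0$ and set $t=2c+1$. Given a graph $G$ (with at least one vertex, as before), let $G^{(t)}$ be the disjoint union of $t$ copies of $G$. Then
\[
  \tau(G^{(t)})=t\,\tau(G),\qquad |V(G^{(t)})|=tn,\qquad |E(G^{(t)})|=tm .
\]
The graph $G^{(t)}$ is finite and simple, and $A_{G^{(t)}}$ is computable in time polynomial in the size of $G$ because $t$ is a constant. By Proposition~\ref{prop:structure}, $A_{G^{(t)}}$ satisfies restrictions (i)--(iii). Theorem~\ref{thm:exact-distance} gives
\[
  \ellCNOT(A_{G^{(t)}})=t\bigl(2n+2m+\tau(G)\bigr).
\]

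Suppose an algorithm outputs $d$ with $|d-\ellCNOT(A_{G^{(t)}})|\le c$. Then
\[
  \Bigl|\tfrac{d}{t}-2n-2m-\tau(G)\Bigr|\le \tfrac{c}{t}<\tfrac12 .
\]
Hence $\tau(G)$ is recovered exactly as the unique integer nearest to $d/t-2n-2m$. This rounding takes polynomial time because $d$ is a polynomially encoded rational. The exact computation of $\tau(G)$ is NP-hard (cf.\ Corollary~\ref{cor:exact-hard}), so producing $d$ is NP-hard. Equation~\eqref{eq:ancilla-equality} gives the same value for $\ellclean$ and $\ellborrow$ on $A_{G^{(t)}}$, so the same argument applies verbatim to those two optima.

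For the final assertion, suppose we are given a feasible implementation of length $L$ with $\mathrm{OPT}\le L\le \mathrm{OPT}+c$. Setting $d=L-c/2$ gives $|d-\mathrm{OPT}|\le c/2\le c$, so the first part applies. The length $L$ is read off from the circuit in polynomial time. The output is polynomially encoded, since any implementation within $c$ of the optimum has length at most $N^2+2N+c$, and in the ancilla models any unused ancilla labels can be discarded as in Corollary~\ref{cor:ancilla-np}.

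There is one point to check. The copies of $A_G$ need not be related to block-diagonal structure at all: the identity for $G^{(t)}$ follows directly from applying Theorem~\ref{thm:exact-distance} to the graph $G^{(t)}$ itself. No separate additivity lemma for CNOT distance is required, and the argument is routine once this is observed.
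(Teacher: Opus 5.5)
Your proposal is correct and follows the paper's proof essentially verbatim: take the disjoint union of $t=2c+1$ copies, apply Theorem~\ref{thm:exact-distance} directly to the union, round $d/t-2n-2m$ to recover $\tau(G)$, and invoke Proposition~\ref{prop:ancillas} for the clean and borrowed optima. The only cosmetic difference is in the final assertion. The paper simply takes $d=L$, using $\mathrm{OPT}\le L\le\mathrm{OPT}+c$; your shift to $d=L-c/2$ is valid but not needed.
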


\begin{proof}
Assume that such a polynomial-time algorithm exists, and set
\[
  t=2c+1.
\]
Given a graph $G$ with at least one vertex, let $H$ be the disjoint
union of $t$ copies of $G$. Then
\begin{align*}
  |V(H)|&=t|V(G)|,\\
  |E(H)|&=t|E(G)|,\\
  \tau(H)&=t\tau(G).
\end{align*}
Set
\[
  D_H=2|V(H)|+2|E(H)|.
\]
Theorem~\ref{thm:exact-distance} gives
\[
  \ellCNOT(A_H)=D_H+t\tau(G).
\]
Applying the assumed algorithm to $A_H$ therefore yields
\[
  \left|
    \frac{d(A_H)-D_H}{t}-\tau(G)
  \right|
  \le\frac{c}{2c+1}<\frac12.
\]
Rounding to the nearest integer recovers $\tau(G)$ exactly in
polynomial time.

The matrices $A_H$ satisfy restrictions (i)--(iii), and
Proposition~\ref{prop:ancillas} identifies the same optimum in the
clean- and borrowed-ancilla models. Finally, if an algorithm outputs a
feasible circuit of length $L$ satisfying
\[
  L\le\operatorname{OPT}+c,
\]
then feasibility gives $L\ge\operatorname{OPT}$. Taking $d(A)=L$
would therefore satisfy the preceding additive guarantee.
\end{proof}

For the multiplicative approximation result, we make the polynomial
solution encoding explicit. Set
\[
  B(N)=N(N-1)+3N=N^2+2N.
\]
Define \textsc{Minimum-CNOT} to be the minimisation problem whose
instances are explicitly represented nonidentity matrices
$A\in\GL(N,2)$ and whose feasible solutions are CNOT gate lists of
length at most $B(N)$ implementing $A$. The objective is the length of
the gate list. The identity is excluded only to avoid a zero optimum
in the multiplicative approximation problem.

The Gauss--Jordan construction used above shows that
\[
  \ellCNOT(A)\le B(N)
\]
for every invertible $A$. Hence the length bound does not remove an
optimal solution or change the optimum. Instance recognition,
feasibility, and gate count are polynomial-time computable, and every
feasible solution has polynomial encoding length. Thus
\textsc{Minimum-CNOT} is an NPO minimisation problem.

Define the clean- and borrowed-ancilla variants analogously. A feasible
solution additionally specifies an integer
\[
  0\le r\le2B(N),
\]
uses consecutively labelled ancillary wires, has length at most $B(N)$,
and satisfies the corresponding restoration condition. These bounds
also leave the unrestricted ancillary optima unchanged. Indeed, an
unrestricted optimal ancillary implementation has length
\[
  L^*\le\ellCNOT(A)\le B(N).
\]
After unused ancillary wires are deleted and the remaining ones are
relabeled, it uses at most
\[
  r\le2L^*\le2B(N)
\]
ancillary wires. Feasibility is verified using the same block conditions
as in the proof of Corollary~\ref{cor:ancilla-np}. Hence these are also
well-defined NPO minimisation problems with the unrestricted optima
$\ellclean(A)$ and $\ellborrow(A)$.

We can now state the multiplicative approximation consequence. The length bounds above serve only to give polynomially bounded solution encodings; they do not alter any of the three optima.

\begin{corollary}[APX-hardness]\label{cor:apx-hard}
The ancilla-free, clean-ancilla, and borrowed-ancilla minimisation
problems defined above are APX-hard under L-reductions, even for
matrices satisfying restrictions (i)--(iii) of
Theorem~\ref{thm:main}. Consequently, none admits a polynomial-time
approximation scheme unless $\mathrm{P}=\mathrm{NP}$.
\end{corollary}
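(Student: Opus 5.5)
We give an L-reduction from Minimum Vertex Cover on cubic graphs, which is APX-complete (Alimonti--Kann). The instance map is $f(G)=A_G$. A cubic graph has $n\ge4$, so $A_G$ is a nonidentity matrix and a legal instance of \textsc{Minimum-CNOT}; Proposition~\ref{prop:structure} supplies restrictions (i)--(iii). The solution map $g$ takes a feasible gate list for $A_G$ and runs the polynomial-time decoder of Proposition~\ref{prop:decoder}. In the ancillary variants $g$ runs the decoder of Proposition~\ref{prop:ancillas} instead. In every case $g$ returns a vertex cover $C$ with $|C|\le L-2n-2m$.

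The first L-reduction condition bounds the optimum. For cubic $G$ we have $m=3n/2$, and every vertex covers at most three edges, so $\tau(G)\ge m/3=n/2$. Theorem~\ref{thm:exact-distance} and Proposition~\ref{prop:ancillas} then give
\[
\operatorname{OPT}(A_G)=2n+2m+\tau(G)=5n+\tau(G)\le 11\,\tau(G).
\]
Hence $\alpha=11$ works in all three models.

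The second condition follows from the decoder bound with $\beta=1$. For any feasible solution of length $L$,
\[
|C|-\tau(G)\le L-2n-2m-\tau(G)=L-\operatorname{OPT}(A_G).
\]
Since L-reductions compose and preserve APX-hardness, each of the three \textsc{Minimum-CNOT} problems is APX-hard on matrices satisfying (i)--(iii). A PTAS for any of them would therefore give a PTAS for cubic vertex cover, which is impossible unless $\mathrm{P}=\mathrm{NP}$.

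The main point to check is that the decoder applies to \emph{every} feasible solution of the NPO formulations. In particular, ancillary solutions with up to $2B(N)$ wires must be covered. This holds because Lemma~\ref{lem:robust-extraction} allows an arbitrary finite $\mathcal A$. A borrowed solution is clean once its ancillas are fixed to zero, so the clean-case decoder handles it as well.
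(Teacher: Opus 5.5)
Your proposal is correct and follows essentially the same route as the paper: an L-reduction from cubic Vertex Cover with instance map $G\mapsto A_G$, the constructive decoder as solution map, and constants $\alpha=11$ (from $n\le 2\tau(G)$) and $\beta=1$ (from $|C|\le L-2n-2m$). Your explicit checks that $A_G$ is a nonidentity instance and that every ancillary feasible solution is covered via an arbitrary finite $\mathcal A$ are details the paper leaves more implicit.
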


\begin{proof}
 The minimum vertex cover is APX-complete even on cubic graphs, in which
every vertex has degree three~\cite{AlimontiKann2000}. Let
$G=(V,E)$ be a nonempty cubic instance and write
\[
  n=|V|,
  \qquad
  m=|E|,
  \qquad
  D_G=2n+2m.
\]
Let $C^*$ be a minimum vertex cover. Since $G$ is cubic,
\[
  2m=3n.
\]
Every edge has at least one endpoint in $C^*$, so
\[
  m
  \le\sum_{v\in C^*}\deg(v)
  =3\tau(G).
\]
It follows that $n\le2\tau(G)$. Consequently,
Theorem~\ref{thm:exact-distance} gives
\begin{align*}
  \operatorname{OPT}(A_G)
  &=D_G+\tau(G)
  =2n+2m+\tau(G)\\
  &=5n+\tau(G)
  \le11\tau(G).
\end{align*}
This is the first L-reduction condition, with $\alpha=11$.

Now let $\mathcal Q$ be any feasible target circuit for $A_G$, and let
$L$ be its length. Proposition~\ref{prop:decoder} constructs in
polynomial time a vertex cover $C$ satisfying
\[
  |C|\le L-D_G.
\]
Because every vertex cover has size at least $\tau(G)$,
\begin{align*}
  0\le |C|-\tau(G)
  &\le L-D_G-\tau(G) =L-\operatorname{OPT}(A_G).
\end{align*}
Thus the instance map
\[
  G\longmapsto A_G
\]
and the solution map that applies the constructive decoder to
$\mathcal Q$ form an L-reduction with constants
\[
  \alpha=11,
  \qquad
  \beta=1,
\]
in the sense of Papadimitriou and
Yannakakis~\cite{PapadimitriouYannakakis1991}.
Proposition~\ref{prop:structure} supplies restrictions (i)--(iii) for
the target matrices.

For clean and borrowed ancillary circuits,
Proposition~\ref{prop:ancillas} gives both the same target optimum and
a polynomial-time decoder satisfying the same bound for every feasible
implementation. The identical inequalities therefore establish
L-reductions with $\alpha=11$ and $\beta=1$ in both ancillary models.
\end{proof}

%% file: 5-conclusion.tex
\section{Conclusion}\label{sec:conclusion}

We have proved that \textsc{CNOT-Distance} is NP-complete in the fixed-label, all-to-all model. Thus exact CNOT-count minimisation remains computationally hard even after routing constraints and restricted gate availability are removed. Hardness already holds for sparse upper-unitriangular targets whose rows have Hamming weight at most four, for which
\[
    (A-I_N)^3=0,
\]
and which have exact order four. No cancellation-free, monotonicity, or prescribed gate-order assumption is imposed. 

For every finite simple graph \(G\), our construction satisfies the exact identity
\[
    \ell_{\CNOT}(A_G)
      = 2|V(G)|+2|E(G)|+\tau(G).
\]

The projection--contraction argument extracts the graph-dependent XOR kernel from an arbitrary reversible implementation, despite cancellation and unrestricted reuse of intermediate parities. The accompanying polynomial-time decoder transfers more than decision hardness: exact computation of the CNOT distance is NP-hard, approximation within every fixed additive constant is NP-hard, and the associated optimisation problem is APX-hard. For the constructed family \(A_G\), the optimum is also unchanged by any finite number of clean or borrowed ancillary wires required to be restored.

The result also clarifies what does and does not follow for topology-constrained synthesis. If a connectivity graph is supplied as part of the input, NP-hardness follows immediately by choosing the complete bidirected graph. More generally, consider any architecture family with the following uniformity property: given $t$, one can construct in time polynomial in $t$ a member of the family of size polynomial in $t$, together with an explicitly identified complete bidirected subgraph on $t$ vertices. Given a graph $G$, take $t=2|V(G)|+|E(G)|+1$, place the transformation $A_G$ on these $t$ vertices, and extend it by the identity on all remaining wires. The upper-bound circuit from the reduction uses only gates within the complete bidirected subgraph. Conversely, every architecture-respecting circuit implementing this extended transformation is, upon treating the remaining wires as borrowed ancillas, an unrestricted borrowed-ancilla implementation of $A_G$. Proposition~\ref{prop:ancillas} therefore gives the same lower bound, so the optimum is unchanged and the reduction proves NP-hardness for every architecture family satisfying this uniformity property.

Because the complete bidirected graph is an admissible topology when the connectivity graph is supplied as part of the input, the theorem also implies NP-hardness for that general topology-input model. It does not, however, determine the complexity on any particular fixed sparse architecture: restricting the allowed control--target pairs changes the generating set and hence the distance being minimised. Determining the complexity of exact fixed-label CNOT minimisation on, for example, a bidirected path or a bidirected star is therefore a natural problem not resolved by the present reduction.

Further directions include determining whether the all-to-all problem admits a constant-factor approximation, developing exact or parameterised algorithms for narrower structural classes of matrices, and studying alternative objectives such as CNOT depth or weighted gate count. Our results concern worst-case exact sequential gate-count minimisation; they do not establish hardness for typical instances or for approximate realisation of a target transformation. Finally, the equality between the ancilla-free, clean-ancilla, and borrowed-ancilla optima is established for the reduction family \(A_G\), not for arbitrary linear-reversible transformations.

%% file: 6_Appendix.tex
\section{A proof of the binary \texorpdfstring{$z$}{z}-expression lemma}\label{app:bmp-proof}

We give a self-contained proof of \Cref{lem:bmp} specialised to $\F_2$. In addition to establishing the exact XOR complexity, the proof gives a polynomial-time procedure that extracts a vertex cover from any straight-line program computing the target family. The argument permits cancellation, repeated intermediate values, repeated computations of target forms, and unrestricted reuse of earlier values.

\begin{proof}[Proof of \Cref{lem:bmp}]
We first prove the upper bound. Let $C\subseteq V$ be a vertex cover. For each $v\in C$, compute the form
\begin{equation*}
    h_v=z+x_v.
\end{equation*}
For every edge $e=\{u,v\}\in E$, choose an endpoint belonging to $C$, say $u\in C$, and compute
\begin{equation*}
    f_e=h_u+x_v=z+x_u+x_v.
\end{equation*}
This uses $|C|+|E|$ XOR assignments. Taking $C$ to be a minimum vertex cover gives
\begin{equation*}
    \xor(F_G)\leq |E|+\tau(G).
\end{equation*}

For the converse, let $P$ be a length-$s$ XOR straight-line program computing $F_G$, and write its lines as
\begin{equation*}
    w_t=p_t+q_t
    \qquad(t\in[s]),
\end{equation*}
where $p_t$ and $q_t$ are forms already available before line $t$. Call a line \emph{special} if its value belongs to $F_G$, and call it \emph{auxiliary} otherwise. Because $G$ is simple, the $|E|$ forms in $F_G$ are pairwise distinct, and none is an input variable. Since one program line has only one value, each target form must occur as the value of a distinct special line, although a target may be computed more than once. Consequently, at least $|E|$ lines are special. If $a$ denotes the number of auxiliary lines, then
\begin{equation*}
    a\leq s-|E|.
\end{equation*}

We first record a property of the parents of a special line. For a linear form $r$, let $[y]r\in\F_2$ denote the coefficient of the input variable $y$ in $r$, and define
\begin{equation*}
    \sigma(r)=\left([z]r,\sum_{v\in V}[x_v]r\right)\in\F_2^2,
\end{equation*}
where the sum is taken in $\F_2$. The map $\sigma$ is linear, and
\begin{equation*}
    \sigma(z)=(1,0),
    \qquad
    \sigma(x_v)=(0,1),
    \qquad
    \sigma(f_e)=(1,0).
\end{equation*}

We first prove a parent property: every special line has a parent produced by an earlier auxiliary line. Let $w_t\in F_G$ be the value of a special line. Since
\[
    [z]p_t+[z]q_t=[z]w_t=1,
\]
the two parents have different $z$-coefficients. Relabel them as $p$ and $q$ so that $[z]p=0$ and $[z]q=1$.

If $p$ is not an input variable, then, because it is already available, it is the value of an earlier line. That line is auxiliary, since $[z]p=0$, whereas every special-line value has
$z$-coefficient one. Otherwise $p=x_v$ for some $v\in V$. In this remaining case, linearity of $\sigma$ gives
\[
    \sigma(q)=\sigma(w_t)+\sigma(x_v)=(1,1).
\]
This is neither an input signature nor the signature of a special-line value. Hence $q$ is the value of an earlier auxiliary line. This proves the parent property.

Let $\mathcal L$ be the $(|V|+1)$-dimensional vector space of linear forms in the input variables $U_G$. We process the program lines in their given order. After processing the first $t$ lines, we construct a quotient map
\begin{equation*}
    \rho_t:\mathcal L\longrightarrow\mathcal Q_t
\end{equation*}
and a basis $\mathcal B_t$ of $\mathcal Q_t$. If $a_t$ denotes the number of auxiliary lines among the first $t$ lines, we maintain the following properties:
\begin{enumerate}
    \item The image of every input variable and every value $w_i$ with $i\leq t$ is either zero or one element of $\mathcal B_t$.
    \item Every auxiliary-line value $w_i$ with $i\leq t$ has image zero.
    \item Every element of $\mathcal B_t$ is the image of at least one input variable.
    \item We have
    \begin{equation*}
        \dim\mathcal Q_t\geq |V|+1-a_t.
    \end{equation*}
\end{enumerate}

For $t=0$, take $\mathcal Q_0=\mathcal L$, let $\rho_0$ be the identity map, and take the input variables in $U_G$ as the basis $\mathcal B_0$. All four properties then hold.

Suppose that the properties hold after line $t-1$. First assume that line $t$ is auxiliary. By the first property, each of $\rho_{t-1}(p_t)$ and $\rho_{t-1}(q_t)$ is either zero or an element of $\mathcal B_{t-1}$. Define
\begin{equation*}
    \mathcal Q_t
    =
    \mathcal Q_{t-1}\big/\big\langle\rho_{t-1}(w_t)\big\rangle
\end{equation*}
and let $\rho_t$ be the composite of $\rho_{t-1}$ with the quotient map. In other words, we impose the relation that the value of the present auxiliary line is zero.

There are only three possible nontrivial effects. If one parent image is zero and the other is a basis element, that basis element becomes zero. If the parent images are two distinct basis elements, those two basis elements become equal. If the parent images are equal, their sum is already zero and no new relation is imposed. We may therefore choose a basis $\mathcal B_t$ such that every earlier value still has image zero or one basis element, every basis element is still represented by an input variable, and
\begin{equation*}
    \dim\mathcal Q_t\geq\dim\mathcal Q_{t-1}-1.
\end{equation*}
The new auxiliary value has image zero by construction, so all four properties are preserved.

Now suppose that line $t$ is special. In this case, make no further quotient and set
\begin{equation*}
    \mathcal Q_t=\mathcal Q_{t-1},
    \qquad
    \rho_t=\rho_{t-1},
    \qquad
    \mathcal B_t=\mathcal B_{t-1}.
\end{equation*}
By the parent property proved above, one parent of this special line was produced by an earlier auxiliary line. Its image is therefore zero by the second property. The image of the special-line value consequently equals the image of its other parent and is thus either zero or one element of $\mathcal B_t$. The four properties are again preserved.

After all $s$ lines have been processed, write
\begin{equation*}
    \mathcal Q=\mathcal Q_s,
    \qquad
    \rho=\rho_s,
    \qquad
    \mathcal B=\mathcal B_s,
    \qquad
    k=\dim\mathcal Q=|\mathcal B|.
\end{equation*}
Only the $a$ auxiliary lines can decrease the dimension, and each decreases it by at most one. Hence
\begin{equation*}
    k\geq |V|+1-a.
\end{equation*}

For the images of the input variables, write
\begin{equation*}
    c_z=\rho(z),
    \qquad
    c_v=\rho(x_v)
    \qquad(v\in V).
\end{equation*}
Each of these images is either zero or one element of the basis $\mathcal B$. For every edge $e=\{u,v\}\in E$, the target $f_e$ occurs as the value of a special line, so
\begin{equation*}
    c_z+c_u+c_v
    =
    \rho(f_e)
    \in\{0\}\cup\mathcal B.
\end{equation*}
It follows that at least two of $c_z,c_u,c_v$ are equal. Indeed, if they were pairwise distinct and one were zero, their sum would be the sum of two distinct basis elements. If they were pairwise distinct and all were nonzero, their sum would be the sum of three distinct basis elements. By linear independence of $\mathcal B$, neither sum belongs to $\{0\}\cup\mathcal B$.

We now use these images as colours and construct an independent set. For every $\beta\in\mathcal B$, let
\begin{equation*}
    U_\beta=\{y\in U_G:\rho(y)=\beta\}.
\end{equation*}
Every set $U_\beta$ is nonempty because every element of $\mathcal B$ is represented by an input variable. If $c_z=0$, choose one vertex $v_\beta\in V$ satisfying $x_{v_\beta}\in U_\beta$ for every $\beta\in\mathcal B$. If $c_z\neq0$, make such a choice for every $\beta\in\mathcal B\setminus\{c_z\}$. The required vertex exists in each case because $z$ is the only input variable not indexed by a vertex, and its image is either zero or $c_z$. Let $I$ be the set of chosen vertices.

The set $I$ is independent. Indeed, distinct vertices $u,v\in I$ have distinct nonzero colours $c_u$ and $c_v$, and neither of these colours equals $c_z$. Thus $c_z,c_u,c_v$ are pairwise distinct. If $\{u,v\}$ were an edge, this would contradict the conclusion that at least two of the three colours associated with every edge must be equal.

If $c_z=0$, then $I$ contains one vertex for each of the $k$ basis elements. If $c_z\neq0$, it contains one vertex for each of the $k-1$ basis elements different from $c_z$. In either case,
\begin{equation*}
    |V|-|I|\leq |V|+1-k.
\end{equation*}
The complement
\begin{equation*}
    C=V\setminus I
\end{equation*}
is therefore a vertex cover satisfying
\begin{equation*}
    |C|
    =|V|-|I|
    \leq |V|+1-k
    \leq a
    \leq s-|E|.
\end{equation*}
Consequently every length-$s$ program computing $F_G$ satisfies
\begin{equation*}
    s\geq |E|+\tau(G).
\end{equation*}
Together with the upper bound, this proves
\begin{equation*}
    \xor(F_G)=|E|+\tau(G).
\end{equation*}

It remains to verify the algorithmic assertion. Represent each linear form by its coefficient vector in $\F_2^{|V|+1}$. A single pass through the program computes the vector associated with every line and determines whether that line is special. The quotient construction can then be implemented using a disjoint-set data structure on the input variables together with a distinguished zero class. At an auxiliary line, two different nonzero parent classes are merged, a nonzero class paired with the zero class is merged into the zero class, and equal parent classes require no operation; the auxiliary line itself is then recorded as zero. At a special line, one parent is already in the zero class, so the line is recorded as having the class of its other parent. At termination, choose the vertex representatives defining $I$ and return $V\setminus I$. Every step requires polynomial time, and the preceding argument proves that the returned set is a vertex cover of size at most $s-|E|$.
\end{proof}

%% file: bibliography.bib
@article{BoyarMatthewsPeralta2013,
  author = {Boyar, Joan and Matthews, Philip and Peralta, Ren{\'e}},
  title = {Logic minimization techniques with applications to cryptology},
  journal = {Journal of Cryptology},
  volume = {26},
  number = {2},
  pages = {280--312},
  year = {2013},
  doi = {10.1007/s00145-012-9124-7},
  url = {https://doi.org/10.1007/s00145-012-9124-7},
  note = {\href{https://doi.org/10.1007/s00145-012-9124-7}{doi:10.1007/s00145-012-9124-7}}
}

@incollection{ChristensenEtAl2025,
  author = {Christensen, Jens Emil and J{\o}rgensen, S{\o}ren Fuglede and Pavlogiannis, Andreas and van de Pol, Jaco},
  title = {On exact sizes of minimal {CNOT} circuits},
  booktitle = {Reversible Computation},
  editor = {Gl{\"u}ck, Robert and Kaarsgaard, Robin},
  series = {Lecture Notes in Computer Science},
  volume = {15716},
  publisher = {Springer},
  address = {Cham},
  pages = {71--88},
  year = {2025},
  doi = {10.1007/978-3-031-97063-4_6},
  url = {https://doi.org/10.1007/978-3-031-97063-4_6},
  note = {\href{https://doi.org/10.1007/978-3-031-97063-4_6}{doi:10.1007/978-3-031-97063-4\_6}}
}

@techreport{Kang2023,
  author = {Kang, Yifan},
  title = {{CNOT}-optimal circuit synthesis},
  institution = {Research Science Institute, Massachusetts Institute of Technology},
  year = {2023},
  url = {https://math.mit.edu/documents/rsi/2023KangY.pdf},
  note = {\href{https://math.mit.edu/documents/rsi/2023KangY.pdf}{available online}}
}

@incollection{Karp1972,
  author = {Karp, Richard M.},
  title = {Reducibility among combinatorial problems},
  editor = {Miller, Raymond E. and Thatcher, James W. and Bohlinger, Jean D.},
  booktitle = {Complexity of Computer Computations},
  publisher = {Plenum Press},
  address = {New York},
  pages = {85--103},
  year = {1972},
  doi = {10.1007/978-1-4684-2001-2_9},
  url = {https://doi.org/10.1007/978-1-4684-2001-2_9},
  note = {\href{https://doi.org/10.1007/978-1-4684-2001-2_9}{doi:10.1007/978-1-4684-2001-2\_9}}
}

@article{PatelMarkovHayes2008,
  author = {Patel, Ketan N. and Markov, Igor L. and Hayes, John P.},
  title = {Optimal synthesis of linear reversible circuits},
  journal = {Quantum Information and Computation},
  volume = {8},
  number = {3--4},
  pages = {282--294},
  year = {2008},
  doi = {10.26421/QIC8.3-4-4},
  url = {https://doi.org/10.26421/QIC8.3-4-4},
  note = {\href{https://doi.org/10.26421/QIC8.3-4-4}{doi:10.26421/QIC8.3-4-4}}
}

@inproceedings{RomanelloEtAl2025,
  author = {Romanello, Riccardo and Lizzio Bosco, Daniele and Cossio, Jacopo and Sutulovic, Dusan and Serra, Giuseppe and Piazza, Carla and Burelli, Paolo},
  title = {{CNOT} minimal circuit synthesis: A reinforcement learning approach},
  booktitle = {2025 IEEE International Conference on Quantum Artificial Intelligence (QAI)},
  publisher = {IEEE},
  pages = {253--260},
  year = {2025},
  doi = {10.1109/QAI63978.2025.00047},
  url = {https://doi.org/10.1109/QAI63978.2025.00047},
  note = {\href{https://doi.org/10.1109/QAI63978.2025.00047}{doi:10.1109/QAI63978.2025.00047}}
}

@misc{VanDeWeteringAmy2024,
  author = {van de Wetering, John and Amy, Matthew},
  title = {Optimising quantum circuits is generally hard},
  howpublished = {arXiv:2310.05958v3 [quant-ph]},
  year = {2024},
  doi = {10.48550/arXiv.2310.05958},
  url = {https://doi.org/10.48550/arXiv.2310.05958},
  note = {\href{https://doi.org/10.48550/arXiv.2310.05958}{doi:10.48550/arXiv.2310.05958}}
}

@article{BarencoEtAl1995,
  author  = {Barenco, Adriano and Bennett, Charles H. and Cleve, Richard
             and DiVincenzo, David P. and Margolus, Norman and Shor, Peter
             and Sleator, Tycho and Smolin, John A. and Weinfurter, Harald},
  title   = {Elementary Gates for Quantum Computation},
  journal = {Physical Review A},
  volume  = {52},
  number  = {5},
  pages   = {3457--3467},
  month   = nov,
  year    = {1995},
  doi     = {10.1103/PhysRevA.52.3457},
  url     = {https://doi.org/10.1103/PhysRevA.52.3457}
}

@article{AaronsonGottesman2004,
  author  = {Aaronson, Scott and Gottesman, Daniel},
  title   = {Improved Simulation of Stabilizer Circuits},
  journal = {Physical Review A},
  volume  = {70},
  number  = {5},
  pages   = {052328},
  month   = nov,
  year    = {2004},
  doi     = {10.1103/PhysRevA.70.052328},
  url     = {https://doi.org/10.1103/PhysRevA.70.052328}
}

@article{AmyAzimzadehMosca2019,
  author  = {Amy, Matthew and Azimzadeh, Parsiad and Mosca, Michele},
  title   = {On the Controlled-{NOT} Complexity of
             Controlled-{NOT}--Phase Circuits},
  journal = {Quantum Science and Technology},
  volume  = {4},
  number  = {1},
  pages   = {015002},
  year    = {2018},
  doi     = {10.1088/2058-9565/aad8ca},
  url     = {https://doi.org/10.1088/2058-9565/aad8ca},
  eprint  = {1712.01859},
  archivePrefix = {arXiv},
  primaryClass  = {quant-ph}
}

@article{BravyiShaydulinHuMaslov2021,
  author  = {Bravyi, Sergey and Shaydulin, Ruslan and Hu, Shaohan
             and Maslov, Dmitri},
  title   = {Clifford Circuit Optimization with Templates and
             Symbolic Pauli Gates},
  journal = {Quantum},
  volume  = {5},
  pages   = {580},
  month   = nov,
  year    = {2021},
  doi     = {10.22331/q-2021-11-16-580},
  url     = {https://doi.org/10.22331/q-2021-11-16-580},
  eprint  = {2105.02291},
  archivePrefix = {arXiv},
  primaryClass  = {quant-ph}
}

@inproceedings{MurphyKissinger2023,
  author    = {Murphy, Ewan and Kissinger, Aleks},
  title     = {Global Synthesis of {CNOT} Circuits with Holes},
  booktitle = {Proceedings of the 20th International Conference on
               Quantum Physics and Logic ({QPL} 2023)},
  editor    = {Mansfield, Shane and Valiron, Beno{\^i}t
               and Zamdzhiev, Vladimir},
  series    = {Electronic Proceedings in Theoretical Computer Science},
  volume    = {384},
  pages     = {75--88},
  publisher = {Open Publishing Association},
  year      = {2023},
  doi       = {10.4204/EPTCS.384.5},
  url       = {https://doi.org/10.4204/EPTCS.384.5},
  eprint    = {2308.16496},
  archivePrefix = {arXiv},
  primaryClass  = {quant-ph}
}

@misc{Jorgensen2026,
  author        = {J{\o}rgensen, S{\o}ren Fuglede},
  title         = {Lower Bounds for the {CNOT}-Complexity of
                   Linear Reversible Operators},
  year          = {2026},
  eprint        = {2607.22248},
  archivePrefix = {arXiv},
  primaryClass  = {quant-ph},
  doi           = {10.48550/arXiv.2607.22248},
  url           = {https://doi.org/10.48550/arXiv.2607.22248},
  note          = {arXiv:2607.22248}
}

@article{BuFanJoo2025,
  author  = {Bu, Alan and Fan, Evan and Joo, Robert},
  title   = {Minimum Synthesis Cost of {CNOT} Circuits},
  journal = {Quantum Information Processing},
  volume  = {24},
  number  = {7},
  pages   = {208},
  month   = jul,
  year    = {2025},
  doi     = {10.1007/s11128-025-04831-5},
  url     = {https://doi.org/10.1007/s11128-025-04831-5},
  eprint  = {2408.07898},
  archivePrefix = {arXiv},
  primaryClass  = {quant-ph}
}

@inproceedings{JiangEtAl2020,
  author    = {Jiang, Jiaqing
               and Sun, Xiaoming
               and Teng, Shang-Hua
               and Wu, Bujiao
               and Wu, Kewen
               and Zhang, Jialin},
  title     = {Optimal Space--Depth Trade-Off of {CNOT} Circuits
               in Quantum Logic Synthesis},
  booktitle = {Proceedings of the Fourteenth Annual
               {ACM}--{SIAM} Symposium on Discrete Algorithms ({SODA})},
  pages     = {213--229},
  publisher = {Society for Industrial and Applied Mathematics},
  year      = {2020},
  doi       = {10.1137/1.9781611975994.13},
  url       = {https://doi.org/10.1137/1.9781611975994.13},
  eprint    = {1907.05087},
  archivePrefix = {arXiv},
  primaryClass  = {quant-ph}
}

@misc{LiEtAl2026,
  author        = {Li, Chenjian
                   and Gao, Dingchao
                   and Zhou, Xiangzhen
                   and Guan, Ji
                   and Zhu, Pengcheng
                   and Chu, Zhufei},
  title         = {Parallelizable Exact Synthesis of Quantum Circuits
                   via Semi-Tensor Product},
  year          = {2026},
  eprint        = {2607.24195},
  archivePrefix = {arXiv},
  primaryClass  = {quant-ph},
  doi           = {10.48550/arXiv.2607.24195},
  url           = {https://doi.org/10.48550/arXiv.2607.24195},
  note          = {arXiv:2607.24195}
}

@misc{CossioEtAl2026,
  author        = {Cossio, Jacopo and
                   Lizzio Bosco, Daniele and
                   Romanello, Riccardo and
                   Serra, Giuseppe and
                   Piazza, Carla},
  title         = {{AlphaCNOT}: Learning {CNOT} Minimization
                   with Model-Based Planning},
  year          = {2026},
  eprint        = {2604.13812},
  archivePrefix = {arXiv},
  primaryClass  = {cs.AI},
  doi           = {10.48550/arXiv.2604.13812}
}

@article{DeBrugiereEtAl2021,
  author = {Goubault de Brugi\`ere, Timoth\'ee and Baboulin, Marc and Valiron, Beno\^it and Martiel, Simon and Allouche, Cyril},
  title = {Gaussian elimination versus greedy methods for the synthesis of linear reversible circuits},
  journal = {ACM Transactions on Quantum Computing},
  volume = {2},
  number = {3},
  pages = {11:1--11:26},
  year = {2021},
  doi = {10.1145/3474226}
}

@misc{GongYu2026,
  author = {Gong, Sherry and Yu, Andrew},
  title = {Explicit matrices over $\mathbb{Z}_2$ with {CNOT} and row complexity $4n-\mathrm{o}(n)$ and local logic gates},
  year = {2026},
  eprint = {2607.28598},
  archivePrefix = {arXiv},
  primaryClass = {quant-ph},
  url = {https://arxiv.org/abs/2607.28598}
}

@article{Bataille2022,
  author = {Bataille, Marc},
  title = {Quantum circuits of {CNOT} gates: optimization and entanglement},
  journal = {Quantum Information Processing},
  volume = {21},
  pages = {269},
  year = {2022},
  doi = {10.1007/s11128-022-03577-8},
  eprint = {2009.13247},
  archivePrefix = {arXiv},
  primaryClass = {quant-ph}
}

@misc{CaoEtAl2025,
  author = {Cao, Yixin and Lu, Yiren and Nie, Junhong and Sun, Xiaoming and Tian, Guojing},
  title = {Toward Minimum Graphic Parity Networks},
  howpublished = {arXiv preprint arXiv:2509.10070},
  year = {2025},
  eprint = {2509.10070},
  archivePrefix = {arXiv},
  primaryClass = {quant-ph},
  url = {https://arxiv.org/abs/2509.10070}
}

@article{AlimontiKann2000,
  author = {Alimonti, Paola and Kann, Viggo},
  title = {Some {APX}-completeness results for cubic graphs},
  journal = {Theoretical Computer Science},
  volume = {237},
  number = {1--2},
  pages = {123--134},
  year = {2000},
  doi = {10.1016/S0304-3975(98)00158-3}
}

@article{PapadimitriouYannakakis1991,
  author = {Papadimitriou, Christos H. and Yannakakis, Mihalis},
  title = {Optimization, approximation, and complexity classes},
  journal = {Journal of Computer and System Sciences},
  volume = {43},
  number = {3},
  pages = {425--440},
  year = {1991},
  doi = {10.1016/0022-0000(91)90023-X}
}
